\newtheorem{defn}{Definition}
\newtheorem{rmk}{Remark}
\newtheorem{lemma}[defn]{Lemma}
\DeclareMathOperator{\Span}{span}
\DeclareMathOperator{\vect}{vec}
\DeclareMathOperator{\diag}{diag}
\DeclareMathOperator{\rad}{rad}
\newcommand{\nA}{{n_a}}
\newcommand{\nH}{{n_h}}
\newcommand{\nQ}{{n_q}}
\newcommand{\nT}{{n_\theta}}
\newcommand{\nU}{{n_u}}
\newcommand{\nX}{{n_x}}
\newcommand{\nPP}{{n_{\theta,\phi}}}
\newcommand{\nPK}{{n_{\theta,K}}}
\newcommand{\pphi}{{\theta_\phi}}
\newcommand{\pK}{{\theta_K}}
\newcommand{\Th}{{\theta}}
\newcommand{\cB}{\mathcal{B}}
\newcommand{\cC}{\mathcal{C}}
\newcommand{\cF}{\mathcal{F}}
\newcommand{\Ha}{\mathcal{H}}
\newcommand{\cM}{\mathcal{M}}
\newcommand{\cO}{\mathcal{O}}
\newcommand{\cQ}{\mathcal{Q}}
\newcommand{\cT}{\mathcal{T}}
\newcommand{\cU}{\mathcal{U}}
\newcommand{\cX}{\mathcal{X}}
\newcommand{\cY}{\mathcal{Y}}
\newcommand{\bz}{\mathbb{0}}
\newcommand{\R}{\mathbb{R}}
\newcommand{\bP}{\mathbb{P}}
\definecolor{olive}{rgb}{0.6, 0.6, 0.2}
\definecolor{sand}{rgb}{0.8666666666666667, 0.8, 0.4666666666666667}
\definecolor{wine}{rgb}{0.5333333333333333, 0.13333333333333333, 0.3333333333333333}
\definecolor{deblue}{RGB}{11,132,147}
\definecolor{ocra}{RGB}{204, 119, 34}
\newcommand{\fcircle}[2][red,fill=red]{\tikz[baseline=-0.5ex]\draw[#1,radius=#2] (0,0.03) circle ;}
\begin{document}
\title{Optimal Energy Shaping via Neural Approximators}
\author{Stefano Massaroli$^{1,\star}$, Michael Poli$^{2, \star}$, Federico Califano$^{3, \star}$,\\ Jinkyoo Park$^{2}$, Atsushi Yamashita$^{1}$, Hajime Asama$^{1}$
\thanks{
$^\star$Equal contribution authors. $^1$Department of Precision Engineering, The University of Tokyo. $^2$ Department of Industrial Engineering, Korean Advanced Institute of Science and Technology (KAIST). $^{3}$Faculty of Electrical Engineering, Mathematics \& Computer Science, University of Twente. 
\newline
Contacts: {\tt massaroli@robot.t.u-tokyo.ac.jp},\newline
{\tt poli\textunderscore m@kaist.ac.kr}, {\tt f.califano@utwente.nl}
}}
\maketitle

 \begin{abstract}
We introduce \textit{optimal energy shaping} as an enhancement of classical passivity--based control methods. A promising feature of passivity theory, alongside stability, has traditionally been claimed to be intuitive performance tuning along the execution of a given task. However, a systematic approach to adjust performance within a passive control framework has yet to be developed, as each method relies on few and problem--specific practical insights. Here, we cast the classic energy--shaping control design process in an optimal control framework;
once a task--dependent performance metric is defined, an optimal solution is systematically obtained through an iterative procedure relying on neural networks and gradient--based optimization. The proposed method is validated on state--regulation tasks.

\end{abstract}
%

% Introduction
\section{Introduction}%
%
%% Intro PBC
The concept of \textit{energy} experienced, in the last three decades, a surge of attention in both engineering practice as well as robot control theory. \textit{Passivity--based control} (PBC) has since fully established itself as a branch of nonlinear control. Its core idea is to treat dynamical systems as physical entities able to exchange energy with other systems rather than to process signals \cite{van2000l2, ortega2001putting}. The resulting framework, called \textit{port-Hamiltonian} theory, links the concept of energy and the input/output characterization of systems \cite{duindam2009modeling}.
The methodology grew fast, and passivity--based control methods are now ubiquitous in robotics: from {gravity compensation} \cite{80014492902} or {impedance control} \cite{stramigioli2001modeling} to teleoperation \cite{secchi2007control}, formation control \cite{vos2014formation} or non--smooth tasks \cite{massaroliiterative}.
The core motivation of choosing passive control strategies, besides obtaining stability of the closed-loop system, is to guarantee a stable interaction with an unknown passive environment \cite{Stramigioli2015EnergyAwareR}. 
{
In fact the latter, especially when manifests with stiff impedance parameters, could drastically modify the dynamic evolution of the systems, leading to unsafe behaviours.
}
%% Classical approaches to the limitation
In robotics, this framework has evolved into the so called \textit{energy aware robotics} methodology, where safety constraints and energy budgeting techniques can be consistently integrated in the \textit{intrinsically passive control} architecture \cite{Stramigioli2015EnergyAwareR}. A limit of this approach is that the design of the low--level controller is dedicated to achieving passivity of the closed--loop system, without optimizing for task performance. This aspect produced the highly--debated assertion that passivity is associated to a loss of performance in the control due to the choice of over--conservative gains \cite{Dimeas2016}. This loss of performance can be partially addressed by imposing task--dependent \textit{energy budgeting} protocols (see e.g., \cite{groothuis2018general} and references therein) at a higher supervision level of the controller. Nevertheless the often claimed intuitive performance tuning of energy--based control methods is questionable, since in most cases the developed ideas were limited to a re--derivation of PD controllers using novel arguments, \textit{de facto} obscuring the power of the technique to the community.

In most physical domains the predominant class of passivity--based control methods for physical systems is represented by \textit{potential compensation} techniques; these rely on canceling the system's potential part of the energy and replace it with a fictitious quadratic term having the minimum placed in the desired set point. Then, dissipation is often added to the system trough a (fictitious) linear damper. Although this approach presents potential issues related to unstable nonlinear terms introduced by imperfect cancellation of the system's potential (e.g., in case of uncertainties on model's parameters), it remains very popular among, e.g., the robot control community. Furthermore, besides the problem of tuning the gains of the proportional and derivative actions, two deeper questions remain unanswered: \textit{why should we assign a quadratic potential?} and \textit{why should we choose a linear dissipation term?}

%% Learning-based methods: only care about task perf.
For what concerns performance tuning, pure learning--based approaches to control such as reinforcement learning methods \cite{bertsekas1995dynamic,sutton2018reinforcement}, generally place utmost importance on performance optimization at the expense of stability and safety. In particular, stability is only rarely ensured through \textit{ad hoc} algorithmic extensions rather than representing an intrinsic property of the framework. The limitations of this \textit{modus operandi}, permeating the modern landscape of neural network learning--based control and prediction, have been highlighted by the community \cite{bucsoniu2018reinforcement,dulac2019challenges}, and several approaches have been proposed since \cite{berkenkamp2017safe}. 

Here, instead of an attempt to patch out weaknesses starting from a pure learning framework, we propose carefully blending the expressive power of neural networks and passivity based control. To achieve this goal, we systematically design a parametrized passive controller through the solution of an optimization problem, in order to maximize the performance of the system along a task execution. 
A formulation of the optimal control problem involving energy--balancing passive controllers is then obtained through an extension of the continuous--depth framework, recently developed in machine learning. This paradigm, including one of its core model classes, the \textit{Neural Ordinary Differential Equation} (Neural ODE) \cite{chen2018neural, massaroli2020dissecting} has been successfully employed in combination to energy based models \cite{greydanus2019hamiltonian, lutter2019deep, cranmer2020lagrangian, massaroli2020stable}, offering a grounded way to introduce first--principles into neural networks.
\newpage
\subsubsection*{Notation}%
Let $\cM$ be a smooth manifold. For all $x\in\cM$, $T_{x}\cM$ and $T_{x}^*\cM$ are the tangent and cotangent spaces in $x$, while $T\cM$ and $T^*\cM$ are its tangent and cotangent bundles, respectively. Moreover, let us denote with $\prec$ ($\preceq$) and $\succ$ ($\succeq$) negative and positive (semi--) definite matrices. Given a vector space $V$, we devote with $\langle\cdot, \cdot\rangle:V\times V^*\rightarrow\R$ the natural duality paring. $\bz_n$ represents the $n-$dimensional zero vector.
Let $\Ha(x):\cM\rightarrow\R$ be a scalar function over a smooth manifold $\cM$. ${\partial\Ha}/{\partial x}\in T^*\cM$ is the gradient of $\Ha$ and is represented in coordinates as a row vector. We refer to $\nabla\Ha\in T\cM$ as the vector gradient, which in a coordinate representation is a column vector, the transposed of the aforementioned gradient. We use the triplet $(\Omega, \cF, \bP)$ to denote a \textit{probability space} with sample space $\Omega$, $\sigma-$algebra $\cF$ and measure $\bP$. Given a random variable $\alpha$ defined on such probability space, we represent its expectation as $\mathbb{E}_{\omega\sim\bP(\omega)}[\alpha] :=\int_{\Omega}\alpha(\omega)\dd\bP(\omega)$.
%

% Preliminaries
\section{Background}% 
%%%%%%%%%%%%%%%%%%%%%%%%%%%%%%%%%%%%%%%%
In this section, we give an overview on the port--Hamiltonian representation of dynamics and the \textit{energy balancing} paradigm.
\subsection{Input--State--Output Model of Port--Hamiltonian Systems}
The input--state--output representation of a port--Hamiltonian system \cite{van2000l2} consists in a dynamic model of the form: 
\begin{equation}\label{eq:ph}
    \begin{matrix*}[l]
        &\dot x = F(x)\nabla\Ha(x) + g(x)u\\
        &y = g^\top(x)\nabla\Ha(x)
    \end{matrix*}
\end{equation}
with state $x\in\cX$ ($\cX$ is a $\nX$--dimensional manifold), input $u\in\cU$ ($\cU:=\cC^\infty(\R\rightarrow\R^\nU)$) and output $y\in\cY$. The system matrices are such that $F(x)+F^\top(x)\preceq 0$ ($F:\cX\rightarrow\R^{\nX\times\nX}$), input matrix $g:\cX\rightarrow\R^{\nX\times\nU}$ and Hamiltonian $\Ha:\cX\rightarrow\R$.

The main property of system (\ref{eq:ph}) is to be passive \cite{brynes1991} with respect to the storage function $\Ha$. In fact,
\begin{equation}
    \begin{aligned}
        \dot\Ha(x) &= \langle\nabla\Ha(x),\dot{x}\rangle\\
                &=\langle\nabla\Ha(x) ,F(x)\nabla\Ha(x)\rangle + \langle\nabla\Ha(x),g(x)u\rangle\\
                &=\underbrace{\langle\nabla\Ha(x) ,F(x))\nabla\Ha(x)\rangle}_{\leq 0} + \langle y,u\rangle\leq \langle y,u\rangle
    \end{aligned}
\end{equation}
Mechanical systems admits a representation in form (\ref{eq:ph}). Let $\cQ$ be a $\nQ$--dimensional configuration manifold and let $q\in\cQ$ be the vector of generalized coordinates. $p\in T_{q}^*\cQ$ denotes the generalized momenta, $p:=M(q)\dot q$, where $M(q)=M^\top(q)\succ 0~(M:\cQ\rightarrow\R^{\nQ\times\nQ})$ is the inertia matrix of the system. Note that, here the full state $x := (q, p)$ of the system is an element of the cotangent bundle $T^*\cQ$ of the configuration manifold $\cQ$. The equations of motions in canonical form are given by \eqref{eq:ph} with 
$$
    F := \begin{bmatrix}
            \mathbb{O} & \mathbb{I}\\
            -\mathbb{I} & D
        \end{bmatrix}, ~g :=\begin{bmatrix}
            \mathbb{O}\\
            B
        \end{bmatrix}
$$
resulting in 
\begin{equation}\label{eq:mech_ph}
    \begin{aligned}
        \begin{bmatrix}
            \dot q\\
            \dot p
        \end{bmatrix} &= 
        \begin{bmatrix}
            \mathbb{O} & \mathbb{I}\\
            -\mathbb{I} & D
        \end{bmatrix}
        \begin{bmatrix}
            \nabla_{q}\Ha(q, p)\\
            \nabla_{p}\Ha(q, p)
        \end{bmatrix} +
        \begin{bmatrix}
            \mathbb{O}\\
            B
\        \end{bmatrix}u\\
        y &= B\dot q
    \end{aligned}
\end{equation}
where $\Ha:T^*\cQ\rightarrow\R$ is the total energy (Hamiltonian)
\[
    \Ha(q,p):= \frac{1}{2}p^\top M^{-1}(q)p + V(q),
\]
$V:\cQ\rightarrow\R$ the \textit{potential}, $D=D^\top\preceq 0$ takes into account dissipative and friction effects, $B\in\R^{d\times d}$ is the input matrix and $u:\R\rightarrow\R^\nQ$ are the external generalized forces. Here, $q, p\in\R^\nQ, ~x\in\R^\nX, ~\nX=2\nQ$ in coordinates. Note that the rank of $B$ determines the actuation regime of the mechanical system. If $\rank(B)=\nQ$, the system is \textit{fully--actuated} while, if $\rank(B)<\nQ$ the system is \textit{under--actuated}.
\subsection{Passivity Based Control (PBC)}%
Before defining the proposed framework of \textit{optimal energy shaping}, we briefly introduce the classical setting.
\subsubsection{Energy Shaping}
Consider a PH system \eqref{eq:ph}; a control action $u = \beta(x) + v$ solves the PBC problem if the closed-loop system satisfies a desired power-balance equation
\begin{equation*}
    \dot\Ha^*(x) = \langle z(t), v(t)\rangle - d^*(t)
\end{equation*}
where $\Ha^*(x)$ is the desired energy function, $d^*(t)$ the desired dissipation function and $z\in\R^\nU$ the new power conjugated (passive) output.
The most straightforward solution of the energy shaping problem is the so called \textit{energy-balancing PBC} (EB-PBC). The controller can be obtained directly from the power balance equation by setting the desired dissipation $d^*$ equal to the natural dissipation of the system while retaining $y$ as the new passive output, i.e 
\[
d^* := -\langle\nabla\Ha, F(x)\nabla\Ha\rangle,~~z: = y.
\]
\begin{restatable}[EB--PBC \cite{ortega2008control}]{proposition}{propone}\label{prop:1}
If it is possible to find a function $\beta(x)$ such that $\dot \Ha^*(x) -\dot\Ha(x) = \langle y(t),\beta(x)\rangle$ then the control law $u = \beta(x)+v$ is such that  $\dot \Ha^*(x) = \langle y(t),v(t)\rangle -d^*$ is satisfied.
\end{restatable}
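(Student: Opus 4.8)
The plan is to reduce the statement to a one-line verification built on the power-balance identity already established for \eqref{eq:ph} in the Background, so that essentially no new machinery is needed. First I would recall that, along the trajectories of \eqref{eq:ph}, the chain rule gives the identity
\[
\dot\Ha(x) = \langle\nabla\Ha(x), F(x)\nabla\Ha(x)\rangle + \langle y, u\rangle,
\]
which holds for \emph{any} admissible input $u$. Using the definition $d^* := -\langle\nabla\Ha, F(x)\nabla\Ha\rangle$ of the natural dissipation, this reads compactly as $\dot\Ha(x) = -d^* + \langle y, u\rangle$.

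Next I would substitute the proposed control law $u = \beta(x) + v$ and expand the pairing by bilinearity, obtaining
\[
\dot\Ha(x) = -d^* + \langle y, \beta(x)\rangle + \langle y, v\rangle.
\]
At this point the term $\langle y, \beta(x)\rangle$ is exactly the internal power that the static feedback $\beta$ exchanges with the plant, and the matching hypothesis $\dot\Ha^*(x) - \dot\Ha(x) = \langle y, \beta(x)\rangle$ is precisely what is needed to absorb it into the shaped energy rate: combining the hypothesis with the expanded identity eliminates the $\beta$-dependent contribution and leaves only the external supply $\langle y, v\rangle$, paired against the retained passive output $z = y$, together with $-d^*$. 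This is the claimed balance $\dot\Ha^*(x) = \langle y, v(t)\rangle - d^*$.

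The computation itself is routine; the point that deserves care is consistency of the differentiations. All the rates must be read along the \emph{same} closed-loop flow $\dot x = F\nabla\Ha + g(\beta + v)$, and one must track the sign convention fixed for the power $\langle y, \beta\rangle$ injected by the controller (equivalently, whether the added energy $\Ha^* - \Ha$ stores or releases that power); with that convention settled, the matching condition and the identity combine exactly as above. The genuine content of the result is therefore not this verification but the \emph{existence} of a $\beta$ solving the matching condition — a solvability (in general, an integrability/PDE) question that the statement rightly isolates as a hypothesis, and which I expect to be the real obstacle to applying EB--PBC in practice.
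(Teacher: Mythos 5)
First, a point of reference: the paper never actually proves Proposition \ref{prop:1} --- it is imported from the cited reference \cite{ortega2008control}, and the appendix only proves Proposition \ref{prop:PDE_general} and Corollary \ref{cor:potential_shaping}. So your proposal must stand on its own. Its skeleton --- the power balance $\dot\Ha = -d^* + \langle y,u\rangle$, substitution of $u=\beta(x)+v$, then invocation of the matching hypothesis --- is indeed the only sensible route, and your first two steps are correct.

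The gap sits at the one step that carries all the content: the cancellation you claim does not happen under the hypothesis as stated. Substituting the hypothesis $\dot\Ha^* = \dot\Ha + \langle y,\beta(x)\rangle$ into your expanded balance $\dot\Ha = -d^* + \langle y,\beta(x)\rangle + \langle y,v\rangle$ gives
\[
\dot\Ha^* = -d^* + 2\langle y,\beta(x)\rangle + \langle y,v\rangle,
\]
i.e., the $\beta$-dependent term \emph{doubles} rather than cancels. The claimed conclusion $\dot\Ha^* = \langle y,v\rangle - d^*$ follows only if the hypothesis is read with the opposite sign, $\dot\Ha^*(x)-\dot\Ha(x) = -\langle y,\beta(x)\rangle$, in which case $\dot\Ha^* = \dot\Ha - \langle y,\beta\rangle = -d^* + \langle y,v\rangle$ in one line. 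That minus sign is not a ``convention to be settled later'': it is forced by the algebra, it is the energy-balancing relation of the cited literature (the added energy $\Ha^*-\Ha$ equals \emph{minus} the energy supplied through the port by $\beta$), and it is what the paper's own gloss immediately after the proposition (``$\Ha^*$ is the difference between the stored and supplied energy'') actually describes --- the displayed hypothesis in Proposition \ref{prop:1} evidently carries a sign typo. Your closing paragraph shows you sensed exactly this issue, but a proof cannot simultaneously assert that the terms ``combine exactly as above'' and leave the sign unresolved: carried out with the stated sign, your argument proves a different (and false) identity. The fix is small --- state the corrected hypothesis explicitly and then perform the one-line substitution --- but it must be made, not deferred.
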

This means that the state feedback $\beta(x)$ is such that the \textit{added energy} $\Ha^*(x) -\Ha(x)$ equals the energy supplied to the system through the power port by $\beta$ and, consequently, $\Ha^*$ is the difference between the stored and supplied energy.
The closed-form solution of the EB-PBC controller is given by 
\[
    \beta(x) = -g^+(x)F^\top(x)\left(\nabla\Ha^*(x)-\nabla\Ha(x)\right)
\]
where $g^+(x)$ is the left pseudo--inverse of $g(x)$, $g(x):=(g^\top(x)g(x))^{-1} g^\top(x)$ and $\Ha^*(x)$ satisfies the following matching equations 
\begin{equation}\label{eq:matching_cond}
    \begin{bmatrix}
        g^{\perp}(x)F^\top(x)\\
        g^\top(x)
    \end{bmatrix}
    \left(\nabla\Ha^*(x)-\nabla\Ha(x)\right) = \mathbb{0}_{\nX+\nU}
\end{equation}
being $g^{\perp}(x)$ is a left full--rank annihilator of $g(x)$, i.e. $g^\perp(x)$ is such that for all $x\in\cX$, $g^\perp(x) g(x) = \mathbb{0}_\nX$.
\subsubsection{Damping Injection}
From standard passivity--based control \cite{ortega2008control} dissipation can be added to the system via negative output feedback as a mean to extract energy from the system. Let $K = K^\top\succ 0,~(K\in\R^{\nU\times\nU})$. Then, the output feedback
\[
    v = -K y 
\]
preserves passivity by adding dissipation to the system. Consider a port--Hamiltonian system controlled via energy shaping and let us add the damping injection term: $u:= \beta(x) - Ky$. Then, from Proposition \ref{prop:1} we have
\begin{equation}\label{eq:damping_inj}
    \begin{aligned}
        \dot\Ha^* &= \langle y(t),-Ky(t)\rangle - \langle\nabla\Ha(x),F(x)\nabla\Ha(x))\rangle \\
        &= -\nabla^\top\Ha(x) g(x)Kg^\top(x)\nabla\Ha(x) - \nabla^\top\Ha(x)F(x))\nabla\Ha(x)\\
        &= -\nabla^\top\Ha(x)\left[g(x)Kg^\top(x) + F(x))\right]\nabla\Ha(x)\leq 0
    \end{aligned}
\end{equation}
%

% Optimal Energy Shaping
\section{Optimal Energy Balancing}
In general, energy--balancing PBC admits infinite possible choices of $\Ha^*(x)$ and $K$ for a given control problem. Nonetheless, in practice, quadratic desired energies and constant damping matrices dominate almost most approaches. 

Consider a time horizon $\cT:=[0, T]$ ($T>0$). Once a scalar \textit{cost function} $\ell$ is defined, we aim at finding desired energy $\Ha^*(x)$ and output feedback gain $K^*$ minimizing that cost. Note that we consider $K^*$ to be a a function of both time and the state, $K^*:\cT\times\cX\rightarrow\R^{\nU\times\nU},~K^* = K^*(t, x)$. In fact, as long as $K^*(t, x(t))\succeq 0$ for all $t\in\cT$, the output feedback law $v=-K^*(t, x)y$ will still extract energy from the system, as it can be seen by inspecting \eqref{eq:damping_inj}. 

If $F, \Ha, g$ and $u$ are smooth enough, for any initial condition $x_0\in\cX$ the port--Hamiltonian system \eqref{eq:ph} admits a unique solution defined on $\cT$. If this is the case, there exists a mapping $\Phi$ from $\cX\times\cU$ to the space of absolutely continuous functions $\cT\mapsto\cX$ such that $x(t) = \Phi(x_0, u)(t)$. For compactness we denote $\Phi(x_0, u)(t)$ by $\Phi_t(x_0, u)$.
In general, $\ell$ comprises two terms, a \textit{terminal cost}, depending on the final state of the system (at the end of the horizon) and an \textit{integral cost}, distributed on the whole time horizon, i.e.
\[
    \begin{aligned}
        \ell(x_0, u) &= \int_0^T l\left(t, u(t),\Phi_t(x_0, u)\right)\dd t\\
        &~~~+ L\left(T, u(T), \Phi_T(x_0, u)\right).
    \end{aligned}
\]
Moreover, let the system initial condition $x(0)=x_0$ be a random variable defined on a probability space $(\Omega, \cF, \bP)$ such that $x_0\in\cX$ almost surely ($\bP(x_0\in\cX)=1$). Namely, we suppose that only the distribution of initial conditions is known, and therefore, we wish to minimize the cost expectation over this distribution. 
This approach allows us to obtain optimal stable controllers for a whole distribution of initial conditions rather than a single one.
The optimal energy--balancing control design can be cast as the following nonlinear program:
\begin{equation}\label{eq:NP}
        \begin{aligned}
            \min_{\Ha^*,~K^*}  &\mathbb{E}_{x_0\sim \bP(x_0)}\left[\ell(x_0, u)\right]\\
            \text{subject to}~~
            &\dot x =  F(x)\nabla\Ha(x) + g(x)u(t,x)~~t\in [0,T]\\
            & u(t, x)= g^+(x)F^\top(x)\left(\nabla\Ha^*(x)- \nabla\Ha(x)\right)\\
            &\qquad~~-K^*(t, x)g^\top(x)\nabla\Ha(x)\\
            &\begin{bmatrix}
                g^{\perp}(x)F^\top(x)\\
                g^\top(x)
            \end{bmatrix}\left(\nabla\Ha^*(x)-\nabla\Ha(x)\right) = \mathbb{0}\\
            &K^*(t, x)\succ 0
        \end{aligned} 
\end{equation}
The above problem does not present, in general, any closed--form solution. Therefore, we hereby propose a method to reach an optimal solution by iterating a stochastic gradient descent \cite{robbins1951stochastic} algorithm. 
We start by breaking down (\ref{eq:NP}) in sub--problems, listing the issues needed to be overcome:
\begin{itemize}
    \item[1.] The nonlinear program presents a PDE constraint
    $$
        \begin{bmatrix}
                    g^{\perp}F^\top\\
                    g^\top
        \end{bmatrix}\left(\nabla\Ha^*(x)-\nabla\Ha(x)\right) = \mathbb{0}_{\nX+\nU};
    $$
    \item[2.] $\Ha^*(x)$ and $K^*(t, x)$ belong to some functional space, i.e. the optimization problem is infinite-dimensional in nature;
\end{itemize}
Moreover, our aim is to be as agnostic as possible with respect to the cost functional $\ell$ (which represents the task to be executed) while deriving a solution algorithm.
In the following, we address the mentioned problems and derive a systematic procedure to design the optimal energy balancing controller for a given system.

In order to reduce the complexity of the optimization, our first aim is to get rid of the PDE constraints. We therefore seek to characterize a subset of optimal solutions for which those constraints are automatically satisfied.
\begin{restatable}[Reduced--order energy shaping]{proposition}{PotShap}\label{prop:PDE_general}
    If there exists a map $\Lambda:\cX\rightarrow\R^{\nA\times\nA}$ ($\nA<\nX$) such that for any $x\in\cX$
    \begin{equation}
        \begin{aligned}
            &\rank(\Lambda(x)) = \nA\\
            &\ker\left(
                \begin{bmatrix}
                    g^\perp(x) F^\top(x)\\
                    g^\top(x)
                \end{bmatrix}\right)
            =
            \Span\left(\begin{bmatrix}
                    \Lambda(x)\\
                    \mathbb{O}
                \end{bmatrix}\right)
        \end{aligned}
    \end{equation}
    then, any desired energy function $\Ha^*(x) := \Ha(x) + \phi^*(x_a)$ with $x_a\in\cX_a$ ($\cX_a$ is a $n_a$--dimensional submanifold of $\cX$), $\phi^*:\cX_a\rightarrow\R$, $x = (x_a,x_b)$ ($x_b\in\cX\setminus\cX_a$), satisfies the matching equations (\ref{eq:matching_cond}).
\end{restatable}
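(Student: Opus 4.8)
The plan is to verify the matching equations \eqref{eq:matching_cond} by direct substitution, showing that for the proposed family $\Ha^*(x) = \Ha(x) + \phi^*(x_a)$ the gradient difference $\nabla\Ha^*(x) - \nabla\Ha(x)$ falls, pointwise in $x$, inside the kernel identified in the hypothesis. First I would exploit linearity of the gradient: since $\Ha^* = \Ha + \phi^*$, we have $\nabla\Ha^*(x) - \nabla\Ha(x) = \nabla\phi^*(x_a)$. The decisive structural fact is that $\phi^*$ is a function of $x_a$ alone, so in the coordinate splitting $x = (x_a, x_b)$ all its partial derivatives with respect to $x_b$ vanish and the vector gradient takes the block form
\[
    \nabla\phi^*(x_a) = \begin{bmatrix} \nabla_{x_a}\phi^*(x_a)\\ \mathbb{0} \end{bmatrix},
\]
whose lower block has dimension $\nX - \nA$.

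Next I would characterise the kernel appearing in the assumption. Because $\Lambda(x)$ is square, $\nA\times\nA$, with $\rank(\Lambda(x)) = \nA$, it is invertible for every $x$, so its column space exhausts $\R^{\nA}$. Consequently
\[
    \Span\left(\begin{bmatrix} \Lambda(x)\\ \mathbb{O} \end{bmatrix}\right) = \left\{\begin{bmatrix} v\\ \mathbb{0} \end{bmatrix} : v \in \R^{\nA}\right\},
\]
that is, exactly those vectors of $\R^{\nX}$ whose last $\nX - \nA$ entries vanish. By hypothesis this set coincides with $\ker\left(\begin{bmatrix} g^\perp(x) F^\top(x)\\ g^\top(x) \end{bmatrix}\right)$. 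Comparing the two displays, $\nabla\phi^*(x_a)$ has precisely the admissible block structure $\begin{bmatrix} v \\ \mathbb{0} \end{bmatrix}$ with $v = \nabla_{x_a}\phi^*(x_a)$, so it lies in the kernel; applying the matching operator annihilates it, which is the claimed identity \eqref{eq:matching_cond}.

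I expect the main obstacle to lie not in the algebra, which is essentially immediate once the block structure is exposed, but in the geometric bookkeeping: I must be careful that the coordinate decomposition $x = (x_a, x_b)$ used to define $\phi^*$ is the very same splitting in which the kernel acquires the aligned block form assumed in the statement, so that "$\phi^*$ independent of $x_b$" genuinely translates into "gradient with vanishing lower block". This is exactly where the \emph{square, full-rank} hypothesis on $\Lambda$ does the real work: it forces the $\nA$-dimensional kernel to coincide with an entire coordinate block rather than a skew $\nA$-plane, which is what permits an \emph{arbitrary} admissible potential $\phi^*$ on $\cX_a$ to be added while keeping the constraint satisfied. A secondary point worth checking is that $\cX_a$ being an $\nA$-dimensional submanifold is consistent with $\nA < \nX$ and with the rank of the matching operator, so that the reduced shaping is genuinely nontrivial.
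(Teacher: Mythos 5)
Your proof is correct and rests on the same ingredients as the paper's: the equivalence between the matching equations and membership of $\nabla\Ha^*(x)-\nabla\Ha(x)$ in the kernel of the stacked matrix, together with the fact that the square, full--rank $\Lambda(x)$ forces that kernel to be exactly the set of vectors with vanishing $x_b$--block, which is precisely the block structure of $\nabla\phi^*(x_a)$. The only difference is one of direction: you verify sufficiency by direct substitution, whereas the paper first uses the full--rank hypothesis to reduce the matching PDE to $\nabla_{x_b}\Ha^*(x)-\nabla_{x_b}\Ha(x)=\mathbb{0}$ and then reads off the family $\Ha(x)+\phi^*(x_a)+c$ as its solutions, an arrangement that additionally suggests this family essentially exhausts all solutions, although the proposition as stated only requires the sufficiency you proved.
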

Thus, according to to Proposition \ref{prop:PDE_general}, the nonlinear program (\ref{eq:NP}) can then be reformulated without the PDE constraint as
\begin{equation}\label{eq:NPunc}
        \begin{aligned}
            \min_{\phi^*,~K^*}  & \mathbb{E}_{x_0\sim\bP(x_0)}\left[\ell(x_0, u)\right]\\
            \text{subject to}~~
            &\dot x =  F(x)\nabla\Ha(x) + g(x)u(t,x)~~t\in [0,T]\\
            & u(t, x) = -g^+(x)F^\top(x)\nabla_x\phi^*(x_a)\\
            &\qquad~~~-K^*(t, x)g^\top(x)\nabla\Ha(x)\\
            &K^*(t, x)\succ 0\\
        \end{aligned}
\end{equation}
\begin{rmk}
From a control-theoretic point of view the introduced optimisation problem, which will be solved by means of neural approximators, provides a link between classical Lyapunov stability and performance of a dynamical system. The system \textit{learns} the optimal function $\phi^*$, and as consequence the optimal energy-like function $\mathcal{H}^*$ which is, by construction, a Lyapunov function for the closed-loop system whenever $\phi^*$ is bounded. Unlike most of the control strategies learned by neural networks, passivity (and as consequence stability) of the closed-loop system is a trivial corollary in this energy-based framework, in which the learning part of the algorithm acts without breaking down the structure of the underlying physical system.
\end{rmk}%

% Optimization
\section{Solving the Optimization Problem}
If Proposition \ref{prop:PDE_general} solves the issue about PDE constraints, in the following we address the second problem, concerning the infinite-dimensional nature of the space in which the decision variables live. In particular we restrain the search to an optimal solution by discretizing the problem by means of neural--networks.

\subsection{Stochastic Gradient Descent}
In this work the optimal energy shaping controller will be derived via \textit{stochastic gradient descent} (SGD) optimization \cite{robbins1951stochastic}. The algorithm works as follows: given a scalar function $\ell(\alpha, x_0)$ dependent on the variable $\alpha\in\R^{n_\alpha}$, it attempts at computing $\alpha^*=\arg \min_{\alpha} \mathbb{E}_{x_0\sim\bP(x_0)}[\ell(\alpha, x_0)]$ starting from an initial guess $\alpha_0$ by updating a candidate solution $\alpha_k$ recursively as follows:
\[
    \alpha_{k+1} = \alpha_{k} - \frac{\eta_k}{N}\sum\limits_{i=1}^N\frac{\dd}{\dd\alpha} \ell(\alpha_k, x_0^i)
\]
where $N$ is a predetermined number of samples $x_0^i$ of the random variable $x_0$ and $\eta_k$ is a positive scalar \textit{learning rate}. If $\eta_k$ is suitably chosen and $\ell$ is convex, $\alpha_k$ converges in expectation to the minimizer of $\ell$ as $k\rightarrow\infty$. Although global convergence is no longer guaranteed in the non--convex case, SGD--based techniques are widely used in practical applications, especially among the machine learning community, due to their scalability and computational efficiency.
\subsection{Finite dimensional optimization via neural approximators}
Let us assume $\phi^*\in\cC^\infty(\cX_a\rightarrow\R)$, $ K^*\in\cC^\infty(\cT\times\cX\rightarrow\R^{\nU\times\nU})$. After getting rid of the PDE constraints a solution of the nonlinear program (\ref{eq:NPunc}) could be obtained, in principle, by iterating the SGD algorithm in $\cC^\infty(\cX_a\rightarrow\R)\times\cC^\infty(\cT\times\cX\rightarrow\R^{\nU\times\nU})$ \cite{smyrlis2004local}. 

Let $\phi_k^*$, $K^*_k$ be the optimal solutions at the $k$th SGD iteration. The update rule of ${\phi}_k^*$ and $K^*_k$ is 
\begin{equation}
    \begin{aligned}
        \phi^*_{k+1}(x_a) &=  \phi^*_{k}(x_a) - \frac{\eta_k}{N}\sum\limits_{i=1}^N\dfrac{\delta}{\delta\phi^*}\ell(x_0^i, u)\\
         K^*_{k+1}(t,x) &=  K^*_{k}(t, x) - \frac{\eta_k}{N}\sum\limits_{i=1}^N\dfrac{\delta}{\delta K^*}\ell(x_0^i, u)
    \end{aligned}
\end{equation}
where $\delta/\delta (\cdot)$ is the variational or functional derivative. While local convergence to optimal solutions is still ensured under mild regularity and smoothness assumptions, obtaining derivatives in function space turns out to be computationally unfeasible. Additionally, besides of symbolic approaches, storing and manipulating functions in software implementation is impossible. We therefore seek to reduce the problem to finite dimension by approximating $\phi^*(x_a)$ and $K^*(t, x)$ with \textit{neural networks}, and show that this retains the performance edge over standard approaches.

Let $\phi^*_{\theta_\phi}$ and $K^*_{\theta_K}$ be neural networks with parameters $\pphi\in\R^{\nPP},~\pK\in\R^\nPK$, respectively: 
\[
    \begin{aligned}
        \phi^*_{\theta_\phi}&: \cX_a\rightarrow \R\\
        K^*_{\theta_K}&: \cT\times\cX\rightarrow\R^{\nU\times\nU}
    \end{aligned}
\]
which we use as candidate optimal desired energy and damping injection gain,
\[
    \begin{aligned}
        \phi^*(x_a) &\approx \phi^*_{\theta_\phi}(x_a),\\
        K^*(t, x) &\approx K^*_{\theta_K}(t, x).
    \end{aligned}
\]
Let $\theta:=(\pphi, \pK)\in\R^\nT~(\nT=\nPP+\nPK)$. Without loss of generality and for the sake of a compact notation we simply write $\phi^*_{\theta},~K^*_{\theta}$ and $u_\theta$. The finite--dimensional optimization problem can be finally rewritten as
\begin{equation}\label{eq:NPunc2}
        \begin{aligned}
            \min_{\Th}  & \mathbb{E}_{x_0\sim\bP(x_0)}\left[\ell(x_0, u_\Th)\right]\\
            \text{subject to}~~
            &\dot x =  F(x)\nabla\Ha(x) + g(x)u_\Th(t,x)~~t\in [0,T]\\
            & u_\Th(t, x) = -g^+(x)F^\top(x)\nabla_x\phi^*_\Th(x_a)\\
            &\qquad~~~~- K^*_\Th(t, x)g^\top\nabla\Ha(x)\\
            &K^*_\Th(t, x)\succ 0
        \end{aligned}
\end{equation}
%
% or, for a robotic system, as
%

%
%
Then, the optimization problem turns in finding via SGD the optimal parameters $\theta$ by iterating
%/
\begin{equation}\label{eq:SGD}
        \theta_{k+1} = \theta_{k} - \frac{\eta_k}{N}\sum\limits_{i=1}^N\dfrac{\dd}{\dd\theta}\ell(x_0^i, u)
\end{equation}
\begin{rmk}
    If strong priors over the functional class of $\phi$ and $K^\star$ are given, the approximation problem can be rewritten onto an orthonormal basis where the functions are realized trhough a truncated eigenfunction expansion in which we optimize for the eigenvalues.
\end{rmk}

In order to perform the above gradient descent we therefore need to compute the gradient (i.e. the sensitivity) of $\ell(x_0, u)$ with respect to the parameters $\theta$. 
\subsection{Computing the Sensitivities}
In order to compute the the quantity $\dd \ell/\dd \theta$ let us recall some recent results of the machine learning community developed in the context of Neural ODEs \cite{chen2018neural}. For the sake of compactness let us denote with $f_\theta(t, x)$ the port-Hamiltonian vector field controlled via energy shaping, i.e. 
$$
    f_\theta(t, x) :=F(x)\nabla\Ha(x) + g(x)u_\theta(t, x).
$$
In order to compute $\dd \ell(x_0^i, u)/\dd \theta$, the initial value problem 
$$ \dot x = f_\theta(t, x),~~x(0) = x_0^i $$
is solved within the time horizon $[0, T]$ via a chosen (possibly adaptive--step) ODE solver. The obtained solution is then used to evaluate the cost $\ell$. From a software implementation perspective $\dd \ell/\dd \theta$ could be approximated by taking advantage of some suite of \textit{automatic differentiation} \cite{baydin2017automatic} (AD) available in most machine--learning--related libraries (e.g. {\tt Pytorch} \cite{paszke2019pytorch}, {\tt TensorFlow} \cite{abadi2016tensorflow}, {\tt JAX} \cite{jax2018github} or {\tt Flux} \cite{innes2018fashionable}) . However, such techniques introduce numerical errors in the gradients and, moreover, in several cases they can hardly be applied in practice due to their cumbersome memory footprint. As discussed in \cite{chen2018neural}, if we suppose that the numerical solver takes $\tilde{T}$ steps to integrate the ODE in $[0, T]$, both the computational complexity and memory efficiency of AD gradients result to be $\cO(\tilde{T})$. Therefore, even for small-sized neural networks $\phi_\theta^*$, $K^*_\theta$ if the ODE is solved with high accuracy and long integration times (and, consequently, a very large $\tilde T$) the memory required to compute the AD gradients grows rapidly. 
For this reason, the \textit{adjoint method} for gradient computation has been introduced in the field, considerably increasing the memory efficiency to $\cO(1)$. This approach made accessible the training of large Neural ODEs with up to millions of parameters for different deep learning tasks \cite{massaroli2020dissecting, zhuang2020adaptive, grathwohl2018ffjord}, practically impossible using AD. The adjoint method, based on Pontryagin maximum principle \cite{pontryagin1962mathematical}, has been introduced in the context of Neural ODEs by \cite{chen2018neural} and generalized in \cite{massaroli2020dissecting} to account for integral cost functions, parameter-dependent costs as well as time-dependent parameters. In our case we will simply use the following result:
\begin{restatable}[Generalized Adjoint Method {\cite[Theorem 1]{massaroli2020dissecting}}]{theorem}{Adjoint}\label{thm:adjoint}
    Consider the initial value problem $\dot x = f_\theta(t, x), ~x(0)=x_0$ and the cost function 
    $$
    \begin{aligned}
        \ell(x_0) &= \int_0^T l\left(\tau, \theta,\Phi_\tau(x_0, u)\right)\dd\tau \\
        &+ L\left(T, \theta, \Phi_T(x_0, u)\right).
    \end{aligned}
    $$
    Then $\dd\ell/\dd{\theta}$ is given by 
    \begin{equation}
        \frac{\dd\ell}{\dd\theta} = \frac{\partial L}{\partial\theta} + \int_0^T\left[\lambda^\top
        \frac{\partial f_\theta(t, x)}{\partial\theta} + \frac{\partial l}{\partial\theta}\right]\dd\tau
    \end{equation}
    where the Lagrangian multiplier $\lambda(t)$ satisfies
    \begin{equation*}
       \dot{\lambda}^\top = -\lambda^\top\frac{\partial f_\theta(t, x)}{\partial x}- \frac{\partial l}{\partial x},~~
       \lambda^\top(T) = \frac{\partial L}{\partial x(T)}
    \end{equation*}
    solved backward in $[0, T]$.
\end{restatable}
Note that, the constant memory efficiency of the adjoint method is a consequence of not needing to store the solution of the system along the trajectory in order to implement some finite--difference--based method (such as AD). Indeed, these methods only require a solution of the adjoint ODE backward in time. In this case the obtained gradient are \textit{exact} if the the adjoint ODE is solved correctly. Numerical errors which arise due to the use of ODE solvers can be bounded by choosing suitable solver tolerances. 

\subsection{Set stability of the optimal energy minima}
As a common practice in robotics and passivity--based control in general, stability is always guaranteed for lower--bounded energy functions. In fact, in presence of friction or a damping injection control action, the system will dissipate all its energy while the state converges towards a minimum point of $\Ha(x)$. However, when the shaped energy is the output of the optimization process \eqref{eq:NPunc2} via stochastic gradient descent \eqref{eq:SGD}, how can we guarantee lower--boundedness of the optimal shaped energy without an explicit nonlinear constraint posed in the optimization phase? Since $\Ha$ is lower bounded, this can be achieved imposing boundedness of $\phi^*_\theta$. When using neural networks this can be easily obtained by using $\gamma$--bounded (scalar) activation functions $\sigma:\R\rightarrow\R$, i.e. such that $\exists \gamma>0 : \forall \alpha\in\R~|\sigma(\alpha)|\leq \gamma$.

Notable examples of such class of activations are the \textit{sigmoid} functions $\sigma(x) := e^x/(1+e^x)$ or hyperbolic tangent $\sigma(x) := \tanh{(x)}$. 
%Let us consider e.g. the latter case of $\sigma$ being the hyperbolic tangent;
The following result proves that, shallow neural networks with $\gamma$--bounded activation\footnote{Note that, following the machine learning convention, activation functions are though to be acting element--wise, i.e. $\sigma(x) = [\sigma(x_1), \dots, \sigma(x_n)]$.} have bounded output.
\begin{lemma}\label{lem:2}
    Let $f_\theta(x)$ be a shallow neural network
    \[
        f_\theta(x) = w^\top\sigma\left(Wx + b\right),\quad \theta:=\vect{(w,W,b)}\in\R^{(2+\nX)\nH}
    \]
    with $w,b\in\R^\nH$, $W\in\R^{\nH\times\nX}$ and $\sigma(\cdot)$ is $\gamma$--bounded. Then
    \[
        \forall x\in\R^\nX~~|f_\theta(x)|\leq\gamma\sum_i|w^{(i)}|
    \]
    \begin{proof}
        \[
            f_\theta(x) = \sum_iw^{(i)}\sigma\underbrace{\left(\sum_jW^{(i,j)}x^{(j)} + b^{(j)}\right)}_{z^{(i)}(x)}
        \]
        Since, for any $x$, $\left|\sigma(z^{(i)}(x))\right|\leq \gamma$ then, $|w^{(i)}\sigma(z^{(i)}(x))|\leq \gamma|w^{(i)}|$. Therefore, by the triangle rule,
        \[
            \begin{aligned}
                |f_\theta(x)| &=\left|\sum_iw^{(i)}\sigma(z^{(i)}(x))\right|\\
                &\leq \sum_i\left|w^{(i)}\sigma(z^{(i)}(x))\right|\leq\gamma\sum_i|w^{(i)}|
            \end{aligned}
        \]
    \end{proof}
\end{lemma}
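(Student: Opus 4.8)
The plan is to reduce the scalar-valued network to its coordinate form and then control each contribution through the saturation property of $\sigma$. First I would write the output explicitly as $f_\theta(x)=\sum_i w^{(i)}\sigma(z^{(i)}(x))$, where the pre-activation of the $i$th hidden unit is $z^{(i)}(x)=\sum_j W^{(i,j)}x^{(j)}+b^{(i)}$. The central fact to exploit is that $\gamma$-boundedness is a property of $\sigma$ as a map $\R\to\R$, so $|\sigma(z^{(i)}(x))|\leq\gamma$ holds for every admissible argument regardless of how large the affine pre-activation $z^{(i)}(x)$ becomes. The key conceptual point is precisely that the unbounded linear layer $Wx+b$ is rendered immaterial once passed through a saturating nonlinearity.

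With the per-unit bound in hand, I would estimate each summand by $|w^{(i)}\sigma(z^{(i)}(x))|=|w^{(i)}|\,|\sigma(z^{(i)}(x))|\leq\gamma|w^{(i)}|$, and then apply the triangle inequality to the finite sum over the $\nH$ hidden units to obtain $|f_\theta(x)|\leq\sum_i|w^{(i)}\sigma(z^{(i)}(x))|\leq\gamma\sum_i|w^{(i)}|$. Since none of these steps depends on $x$, the resulting constant $\gamma\sum_i|w^{(i)}|$ bounds the output uniformly over $\R^\nX$, which is exactly the claimed statement.

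I do not anticipate a genuine obstacle: the result follows directly from the triangle inequality together with the uniform saturation of the activation. The only point meriting a moment of care is the order of quantifiers, namely establishing that the bound is independent of the input $x$ rather than merely finite for each fixed $x$; this is immediate, however, once one observes that the activation bound $\gamma$ is itself $x$-independent by definition of $\gamma$-boundedness.
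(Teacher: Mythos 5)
Your proof is correct and follows exactly the same route as the paper's: expand $f_\theta(x)=\sum_i w^{(i)}\sigma(z^{(i)}(x))$, bound each term by $\gamma|w^{(i)}|$ via the saturation of $\sigma$, and conclude with the triangle inequality over the finite sum. (As a minor aside, your pre-activation index $b^{(i)}$ is the correct one; the paper's displayed $b^{(j)}$ is a typo.)
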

Note that if $f_\theta$ was a deep neural network, Lemma \ref{lem:2} would still hold. In fact, even if $z$ was the output of a deep network:
\[
    z(x) : = W_N\sigma(W_{N-1}(\sigma(\cdots W_1x + b_1) + b_{N-1}) + b_N,
\]
it holds $|\sigma(z^{(i)}(x))|\leq \gamma$ for any $i$. This also shows that, in principle, it is sufficient to require $\gamma$--bounded activations in the last layer only.

Similarly, positive--definiteness of $K^*_\theta$, i.e. $\forall t~~ K^*_\theta(t, x(t))\succeq 0$ must be enforced in order to ensure that the damping injection controller is actually extracting energy from the system rather than replenishing it. If, e.g., $K^*_\theta$ is assumed to be diagonal, $K^*_\theta(t, x) = \diag(k^1_{\theta}(t, x), \cdots, k^\nU_{\theta}(t, x))$ where the elements $k^i_\theta(t, x)$ are the output of the neural network, positive definiteness may be easily obtained by appending a positive activation function, e.g. ReLU or \textit{SoftPlus}, to the output of the network. Note that, the choice of a diagonal $K^*_\theta$ does not constitute a loss of generality since every element $k^i_\theta$ is a function of both the state $x$ and time $t$. Thus, cross--correlations between different outputs which may arise from non--diagonal matrices $K^*_\theta$ could be indirectly learned inside $k^i_\theta(t, x)$, if optimal.

Nonetheless, if $\phi^*_\theta$, $K^*_\theta$ are designed bounded and positive definite, respectively, the following proposition holds.
\begin{restatable}[Stability]{proposition}{stability}\label{prop:stab}
%\begin{prop}[Stability]\label{prop:stab}
    Every closed set $\cM\subset\R^{\nX}$ such that
    \[
        \forall x\in\cM,~\nabla\Ha^*=\mathbb{0}_{\nX}~\text{and}~\nabla^2\Ha^*\succeq 0
    \]
    which is contained in an open neighborhood $\cB\supset\cM$ satisfying
    \[
        \forall x\in\cB\setminus\cM,~\nabla\Ha^*\neq\mathbb{0}_{\nX}
    \]
    is locally asymptotically stable. 
    %
%\end{prop}
\end{restatable}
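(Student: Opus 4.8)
The plan is to use the shaped energy itself as a Lyapunov function for the set $\cM$ and to close the argument with an invariance (LaSalle) step. Since $\cM$ is closed and consists entirely of critical points of $\Ha^*$, continuity of $\nabla\Ha^*$ forces $\Ha^*$ to be constant on each connected component; working on one such component I may assume $\Ha^*|_\cM = \ell^*$ is a single value and take $V(x):=\Ha^*(x)-\ell^*$ as the candidate. Because the neural parametrisation keeps $\phi^*_\theta$ bounded (Lemma \ref{lem:2}) and $\Ha$ is itself lower bounded, $V$ is lower bounded with well-behaved sublevel sets; this is exactly what lets me, if necessary, replace the given $\cB$ by the connected component $\cB_c$ of a sublevel set $\{x:\Ha^*(x)<\ell^*+c\}$ containing $\cM$, with $c>0$ small enough that $\cB_c\subset\cB$ and $\overline{\cB_c}$ is compact.

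First I would show that $V$ is positive definite relative to $\cM$, i.e. $V|_\cM=0$ and $V(x)>0$ on $\cB_c\setminus\cM$. The hypothesis $\nabla^2\Ha^*\succeq0$ on $\cM$ says the points of $\cM$ are weak local minima, but the possible degeneracy of the Hessian means a naive Taylor argument does not give strict positivity. The absence of other critical points in $\cB$ is precisely what excludes flat descending directions: if some $x_\star\in\cB_c\setminus\cM$ had $V(x_\star)\le0$, the anti-gradient flow of $\Ha^*$ from $x_\star$ would strictly decrease $\Ha^*$ below $\ell^*$ and, since it stays in the compact sublevel component $\overline{\cB_c}$, would accumulate at a critical point at a level $<\ell^*$; but the only critical set in $\cB$ is $\cM$, sitting at level $\ell^*$, a contradiction. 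Hence $V>0$ off $\cM$.

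Next I would differentiate $V$ along the closed-loop flow. By construction $\dot V=\dot\Ha^*$, which is exactly the quantity bounded in \eqref{eq:damping_inj}: with $K^*_\theta(t,x)\succeq0$ and $F+F^\top\preceq0$,
\[
\dot V = -\nabla^\top\Ha(x)\big[g(x)K^*_\theta(t,x)g^\top(x)+F(x)\big]\nabla\Ha(x)\le 0
\]
uniformly in $t$. Positive-definiteness of $V$ together with $\dot V\le0$ yields Lyapunov stability of $\cM$ and forward invariance of the compact sublevel sets, so every trajectory starting in $\cB_c$ remains there and $\Ha^*$ decreases to a limit. To upgrade to asymptotic stability I would invoke the invariance principle: the set $\{\dot V=0\}$ is where $\nabla\Ha$ lies in the kernel of the positive-semidefinite dissipation matrix, and imposing $\dot V\equiv0$ along a trajectory together with the energy-shaped closed-loop equations (which, by the EB--PBC construction of Proposition \ref{prop:1}, render the drift a port--Hamiltonian field with Hamiltonian $\Ha^*$) forces $\nabla\Ha^*=\mathbb{0}_{\nX}$. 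Thus the largest invariant subset of $\{\dot V=0\}\cap\cB_c$ is exactly $\cM$, and LaSalle's theorem gives convergence of every trajectory in $\cB_c$ to $\cM$, which with the stability already shown is local asymptotic stability.

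The main obstacles I anticipate are two. The first is the set-stability positivity step, where the merely semidefinite Hessian must be compensated by the no-other-critical-points hypothesis through the gradient-flow argument above rather than by a quadratic lower bound. The second is the explicit time-dependence of $K^*_\theta$, which strictly takes the system outside the autonomous setting of LaSalle: for time-invariant $K^*_\theta$ the invariance step applies verbatim, whereas in the genuinely time-varying case I would replace it by a Barbalat/Matrosov argument, using uniform continuity of $\dot V$ (guaranteed by boundedness of trajectories on the compact sublevel set) to conclude $\dot V\to0$, hence that $\nabla\Ha(x(t))$ enters $\ker\big(gK^*_\theta g^\top+F\big)$ and ultimately $x\to\cM$.
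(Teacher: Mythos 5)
Your route is the same as the paper's: use the shaped energy as a Lyapunov function for $\cM$, get $\dot V\le 0$ from the damping--injection balance \eqref{eq:damping_inj}, and finish with an invariance argument. You are, however, considerably more careful than the paper on exactly the two points where its proof is silent. First, you try to establish positive definiteness of $V=\Ha^*-\ell^*$ relative to $\cM$; the paper never does this, and indeed its proof never uses the hypothesis $\nabla^2\Ha^*\succeq 0$ at all. Second, you correctly observe that the explicit time dependence of $K^*_\theta$ makes the closed loop non--autonomous, so LaSalle's invariance principle --- which the paper invokes implicitly through its items $ii)$--$iii)$ --- does not apply verbatim; your Barbalat/Matrosov substitute is the right repair, and you also sketch the largest--invariant--set claim that the paper merely asserts.

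The genuine gap is in your positivity step, and it cannot be repaired from the stated hypotheses. The existence of $c>0$ such that the component $\cB_c$ of $\{x:\Ha^*(x)<\ell^*+c\}$ containing $\cM$ is precompact and contained in $\cB$ does \emph{not} follow from lower--boundedness of $\Ha^*$: it is essentially equivalent to $\cM$ being a strict local minimum set of $\Ha^*$, i.e., to the very positivity you are trying to prove, so the construction is circular; and if you instead run the anti--gradient flow inside the given $\cB$, nothing confines it to $\cB$, so it need not accumulate at any critical point and the contradiction evaporates. In fact the positivity claim is false under the stated hypotheses. Take the paper's pendulum and a smooth \emph{bounded} shaping potential (consistent with Lemma \ref{lem:2}) such that $W:=V+V^*_\theta$ coincides with $q^3$ on a neighborhood of $q=0$, together with any constant $K^*_\theta\succ 0$. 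Then $\cM=\{(0,0)\}$ is closed, $\nabla\Ha^*(0,0)=\mathbb{0}$, $\nabla^2\Ha^*(0,0)=\diag(0,J^{-1})\succeq 0$, and $(0,0)$ is the only critical point in a small ball $\cB$, so every hypothesis of Proposition \ref{prop:stab} holds; yet for $q_0<0$ arbitrarily small one has $\Ha^*(q_0,0)=q_0^3<0=\Ha^*(0,0)$, and since $\dot\Ha^*\le 0$ along the closed loop, the trajectory from $(q_0,0)$ keeps $\Ha^*\le q_0^3$, hence $q(t)\le q_0$, forever, and can never converge to $\cM$; a short dissipation argument (boundedness would force $p\to 0$ and then $q\to 0$, contradicting $q(t)\le q_0$) shows it must leave a fixed neighborhood, so $\cM$ is not even stable. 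The conclusion to draw is that the crux you isolated --- positivity of $\Ha^*-\ell^*$ near $\cM$ --- genuinely requires strengthening the hypothesis $\nabla^2\Ha^*\succeq 0$ to something like ``$\cM$ is a strict local minimum set of $\Ha^*$'' (or a definiteness condition transverse to $\cM$). Your proof fails exactly where the proposition itself is deficient, whereas the paper's proof, by checking only $\dot V\le 0$ and the invariance conditions (which \emph{do} hold in the counterexample above), never notices the problem.
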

\begin{proof}
    From Proposition \ref{prop:1}, it follows that the closed--loop system is still a port--Hamiltonian system with energy $\Ha^*$ and system matrix $F(x) - g(x)K^*(t, x)g^\top(x)$. Thus, for all $x\in\cM$, $\dot x = \mathbb{0}_{\nX}$ and, thus, $\cM$ is forward invariant. Let $\cB\supset\cM$ be an open neighborhood of $\cM$ such that $\forall x\in\cB\setminus\cM,~\nabla\Ha^*\neq\mathbb{0}_{\nX}$ and let $V(x) := \Ha^*(x)$ be a candidate Lyapunov function. 
    
    It holds:
    \begin{itemize}
        \item[$i)$~] $\forall x\in\cB$
        \[
            \begin{aligned}
            \dot V &= \langle \nabla\Ha(x), F(x)\nabla\Ha(x)\rangle \\
            &-\nabla\Ha(x)g(x)K^*_\theta(t, x)g^\top(x)\nabla\Ha(x)\leq 0
            \end{aligned}
        \]
        \item[$ii)$~] $\forall x\in\cM~~\dot V = 0$
        \item[$iii)$~] $\cM$ is the largest invariant subset of $\{x\in\cB:\dot V=0\}$
    \end{itemize}
    where Proposition \ref{prop:1} has been used in $i)$. Thus $\cM$ is asymptotically stable.
\end{proof}
This means that for all initial conditions $x_0\in\cX$ there will exist a set $\cM$ such that ${\displaystyle\lim_{t\rightarrow\infty}\Phi_t(x_0, u_\theta)\in\cM}$. 

In the next section we consider the optimal energy shaping problem for a robotic system. 
%
% \begin{itemize}
%     \item[a)] The cost gradient ${\de \ell}/{\de \bm\uptheta}$ for each initial condition in $x(0)\in\Omega$;
%     \item[b)] The integral on $\Omega$ of the cost gradients.
% \end{itemize}
%

% Robotic System
\section{Optimal Energy Shaping for a Robotic System}\label{sec:5}
In this paper, as a popular application domain for passivity--based techniques, we specialize on robotic systems which admit a port-Hamiltonian model in the form \eqref{eq:mech_ph}. For systems in such form, it is know that shaping only the potential component of the energy satisfies the matching conditions \eqref{eq:matching_cond}. This may be proven by extending Proposition \eqref{prop:PDE_general} as follows:
\begin{restatable}[Potential Shaping]{corollary}{PotShapCor}\label{cor:potential_shaping}
        Consider a \textit{fully--actuated} mechanical system (\ref{eq:mech_ph}). 
    Then, every desired energy
        \[\Ha^*(q,p) := \Ha(q,p) + V^*(q)\]
    is admissible for all $V^*\in \cC^\infty({\cQ\rightarrow\R})$, in the sense that it satisfies the matching equations (\ref{eq:matching_cond}).
\end{restatable}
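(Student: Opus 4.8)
The plan is to recognize Corollary~\ref{cor:potential_shaping} as a direct instance of Proposition~\ref{prop:PDE_general}, so that the only real work is to exhibit, for the mechanical structure~\eqref{eq:mech_ph}, an explicit map $\Lambda$ satisfying the two hypotheses of that proposition. The key structural fact I would exploit is that full actuation means $\rank(B)=\nQ$, i.e.\ the square matrix $B\in\R^{\nQ\times\nQ}$ is invertible; this is precisely what pins down the kernel below.

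First I would fix a convenient left annihilator of $g$. Since $g=\left[\begin{smallmatrix}\mathbb{O}\\B\end{smallmatrix}\right]$ has column space equal to the momentum block $\{\mathbb{0}_\nQ\}\times\R^\nQ$, the block row $g^\perp:=[\,\mathbb{I}\ \ \mathbb{O}\,]\in\R^{\nQ\times\nX}$ satisfies $g^\perp g=\mathbb{O}$ and has full row rank, hence is an admissible annihilator. Using $F^\top=\left[\begin{smallmatrix}\mathbb{O}&-\mathbb{I}\\\mathbb{I}&D^\top\end{smallmatrix}\right]$ I would then compute $g^\perp F^\top=[\,\mathbb{O}\ \ {-\mathbb{I}}\,]$ and assemble the constraint matrix of~\eqref{eq:matching_cond},
\[
\begin{bmatrix} g^\perp F^\top\\ g^\top \end{bmatrix}
=\begin{bmatrix} \mathbb{O} & -\mathbb{I}\\ \mathbb{O} & B^\top \end{bmatrix}.
\]

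Next I would read off the kernel. A vector $(v_q,v_p)$ lies in it iff $v_p=\mathbb{0}_\nQ$ (top block) together with $B^\top v_p=\mathbb{0}_\nQ$ (bottom block); the latter is automatically implied by the former, so the kernel is exactly $\{(v_q,\mathbb{0}_\nQ)\}=\Span\left[\begin{smallmatrix}\mathbb{I}\\\mathbb{O}\end{smallmatrix}\right]$. This identifies $\Lambda=\mathbb{I}_\nQ$ (full rank, with $\nA=\nQ<\nX=2\nQ$) and the partition $x_a=q$, $x_b=p$, so both hypotheses of Proposition~\ref{prop:PDE_general} hold. Invoking it with $\phi^*:=V^*$ then yields that $\Ha^*=\Ha+V^*(q)$ satisfies the matching equations for every $V^*\in\cC^\infty(\cQ\rightarrow\R)$, which is the claim.

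The step I expect to matter most is the kernel computation, and specifically the role of full actuation in it. If $B$ were rank--deficient (under--actuated), the bottom block $g^\top=[\,\mathbb{O}\ \ B^\top\,]$ would no longer force $v_p$ to vanish only through the top block: $\ker(B^\top)$ would enlarge the admissible $v_p$ directions and the kernel would cease to be the clean configuration subspace $\Span\left[\begin{smallmatrix}\mathbb{I}\\\mathbb{O}\end{smallmatrix}\right]$. In that regime not every potential $V^*(q)$ would be admissible, and one would recover the familiar matching obstruction of under--actuated energy shaping; invertibility of $B$ is exactly what removes it here.
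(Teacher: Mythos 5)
Your core computation is correct and takes essentially the same route as the paper: both arguments assemble the constraint matrix of \eqref{eq:matching_cond} in block form, identify its kernel with the configuration subspace, and invoke Proposition \ref{prop:PDE_general} with $x_a=q$, $x_b=p$, $\phi^*=V^*$. The paper merely keeps a generic annihilator $g^\perp=[\,\tilde B\ \ \mathbb{O}\,]$ where you fix $\tilde B=\mathbb{I}$ (and it writes the bottom block as $[\,\mathbb{O}\ \ B\,]$ where you, more correctly, write $[\,\mathbb{O}\ \ B^\top\,]$).

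There is, however, a genuine gap, and your closing paragraph exposes it: as written, your argument never actually uses full actuation, yet the conclusion is false for under--actuated systems (cf.\ the paper's remark following the corollary), so some step must be unjustified. That step is ``$g^\perp:=[\,\mathbb{I}\ \ \mathbb{O}\,]$ satisfies $g^\perp g=\mathbb{O}$ and has full row rank, hence is an admissible annihilator.'' Annihilation plus full row rank of $g^\perp$ is not the right criterion: for \eqref{eq:matching_cond} to encode the matching obstruction, $g^\perp$ must be a \emph{complete} (maximal) left annihilator, i.e.\ $\rank(g^\perp)=\nX-\rank(g)$, and this is exactly where $\rank(B)=\nQ$ enters --- it gives $\rank(g)=\nQ$, so the $\nQ$ rows of $[\,\mathbb{I}\ \ \mathbb{O}\,]$ exhaust the left null space of $g$. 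Your diagnosis of the under--actuated case is correspondingly wrong: if $B$ drops rank, the kernel of the matrix you assembled does \emph{not} enlarge, because the top block $[\,\mathbb{O}\ \ -\mathbb{I}\,]$ forces $v_p=\mathbb{0}_\nQ$ irrespective of $B$ (you note yourself that the bottom block is redundant), so your computation would go through verbatim and ``prove'' the false claim. The true obstruction arises because a complete annihilator then takes the form
\[
    g^\perp=\begin{bmatrix}\mathbb{I} & \mathbb{O}\\ \mathbb{O} & C\end{bmatrix},\qquad CB=\mathbb{O},
\]
so that $g^\perp F^\top$ carries the extra block row $[\,C\ \ CD\,]$ and the matching equations additionally demand $C\nabla_q V^*(q)=\mathbb{0}$, i.e.\ $\nabla_q V^*(q)$ must lie in the image of $B$: only the actuated directions of the potential can be shaped. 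Note that the paper's arrangement, which keeps $B$ (rather than $-\mathbb{I}$) in the constraint matrix, uses invertibility of $B$ visibly in the kernel computation, whereas your choice of $\tilde B=\mathbb{I}$ hides the hypothesis entirely inside the completeness of the annihilator. The fix to your proof is one sentence --- full actuation gives $\rank(g)=\nQ$, hence $[\,\mathbb{I}\ \ \mathbb{O}\,]$ is a maximal annihilator --- but that sentence is where the hypothesis does all of its work.
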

\begin{rmk}
    In case of under--actuated systems it follows naturally that we can only partially shape the potential. In particular, let $q_a\in\R^a$ ($a<d$) be the generalized positions of the actuated degrees of freedom. Then, we can freely assign any desired energy function $\Ha^*(q,p) := \Ha(q,p) + V^*(q_a)$
\end{rmk}
In practice, most approaches (e.g. PD with gravity compensation) fully compensate the potential $V(q)$, and substitute it with another desired function $\bar{V}(q)$ , i.e.
\begin{equation}
    \begin{aligned}
        &V^*(q): = \bar V(q) - V(q) \\
        \Rightarrow &\Ha^*(q,p) = \frac{1}{2}p^\top M^{-1}(q) p + \bar V(q)
    \end{aligned}
\end{equation}
However, this line of work presents two major drawbacks regarding the \textit{compensation} of the system's potential:
\begin{itemize}
    \item Uncertainties on the model's parameters inevitably translates in partial miscompensation the potential, likely leading to the introduction of unstable nonlinear dynamics;
    \item A full compensation of $V(q)$ in favour of a quadratic potential is not necessary for most position--regulation tasks in the whole configuration manifold. These approaches are in fact, inconsistent with the paradigm of exploiting passive elements of the robot to reduce, e.g., the control effort. 
\end{itemize}
Instead, we should take advantage of the robot's potential (gravity, joints' elasticity) wherever they produce forces aiding the robot in its desired motion. In this work, we go beyond classic techniques by not compensating the potential and simply learning an added potential $V^*(q)$ such that the shaped energy becomes $\Ha(q, p) = p^\top M^{-1}(q)p / 2 + V(q) + V^*(q)$.

Using the same notation and arguments as in the general case developed before, we approximate the decision variable $V^*$ by a neural network $V^*_{\theta}$. Therefore, in this setting, the optimal energy shaping control problem \eqref{eq:NPunc2} translates to the robot case as follows:
\begin{equation}\label{eq:NPunc_robot2}{
        \begin{aligned}
            \min_{\theta}  & ~~\mathbb{E}_{(q_0, p_0)\sim \bP(q_0, p_0)}\left[\ell(q_0, p_0, u_\theta)\right] \\
            \text{s. t.}~~
            &
            \begin{bmatrix}
            \dot q\\
            \dot p
            \end{bmatrix} = 
            \begin{bmatrix}
                \mathbb{O} & \mathbb{I}\\
                -\mathbb{I} & D
            \end{bmatrix}
            \begin{bmatrix}
                \nabla_{q}\Ha(q, p)\\
                \nabla_{p}\Ha(q, p)
            \end{bmatrix} +
            \begin{bmatrix}
                \mathbb{O}\\
                B
            \end{bmatrix}u_\theta\\
            & t\in [0,T]\\
            & u_\theta = -B^{-1}\nabla_{q}V^*_\theta(q)- K^*_\theta(t, q, p)B\dot q\\
            & K^*_\theta(t, q, p)\succeq 0
        \end{aligned}}
\end{equation}

% Experiments
\section{Simulation Experiments}

We validate the proposed Optimal Energy Shaping (OES) control on a series of two different settings involving position regulation tasks of an inverted pendulum with elastic joint. In particular, we feature:
\begin{itemize}
    \item[]\fcircle[fill=deblue]{2pt} A direct comparison between the proposed novel optimal energy shaping control and the classic linear control (quadratic desired potential and linear damping) with potential compensation in which the proportional and derivative gains are tuned alongside the OES controller to minimize the same cost function. 
    \item[]\fcircle[fill=wine]{2pt} A \textit{generalization} task where we feed a desired position set point $q^*$, distributed according to a given probability density function, to the neural network parametrizing the optimal shaped potential, i.e. $V^*_\theta = V^*_\theta(q, q^*)$, 
 \end{itemize}
\subsection{Experimental Setup}
\begin{figure}
    \centering
    \includegraphics[scale=.95]{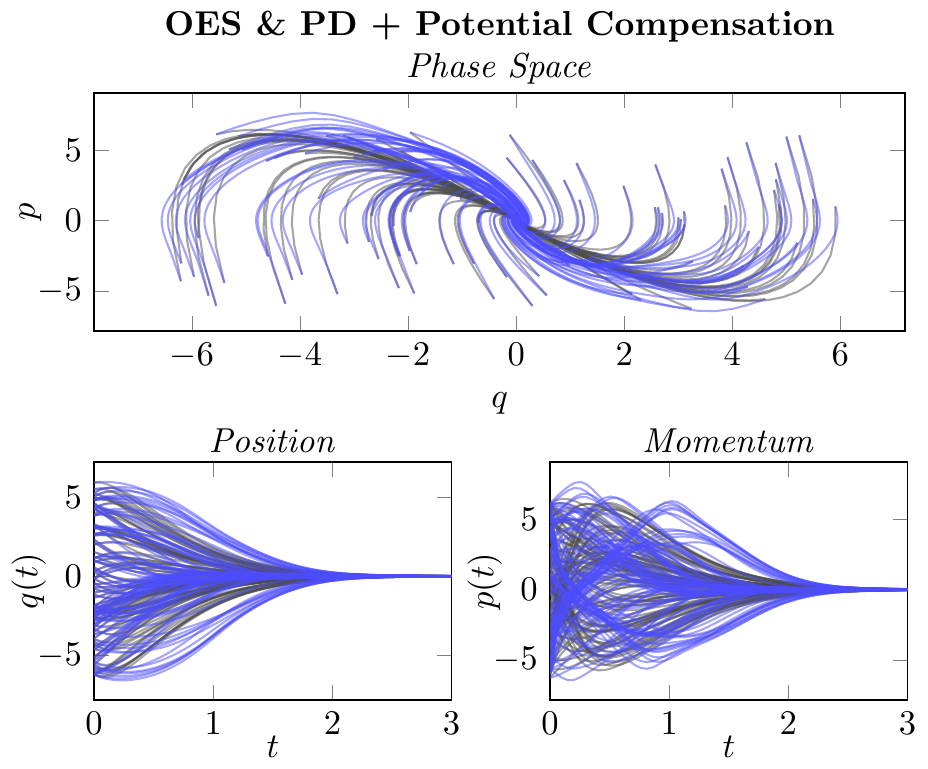}
    \caption{Closed--loop phase--space trajectories in the optimal energy shaping case ({\color{blue!70!white}blue} lines) compared to the classic PD with potential compensation ({\color{black!70!white}grey} lines) for different initial conditions sampled from the probability distribution of initial conditions $\phi_0$. The parameters of the two controllers were optimized to minimize the same cost function in a regulation task.}
    \label{fig:traj1}
\end{figure}
\begin{figure*}[t]
    \centering
    \includegraphics[scale=.95]{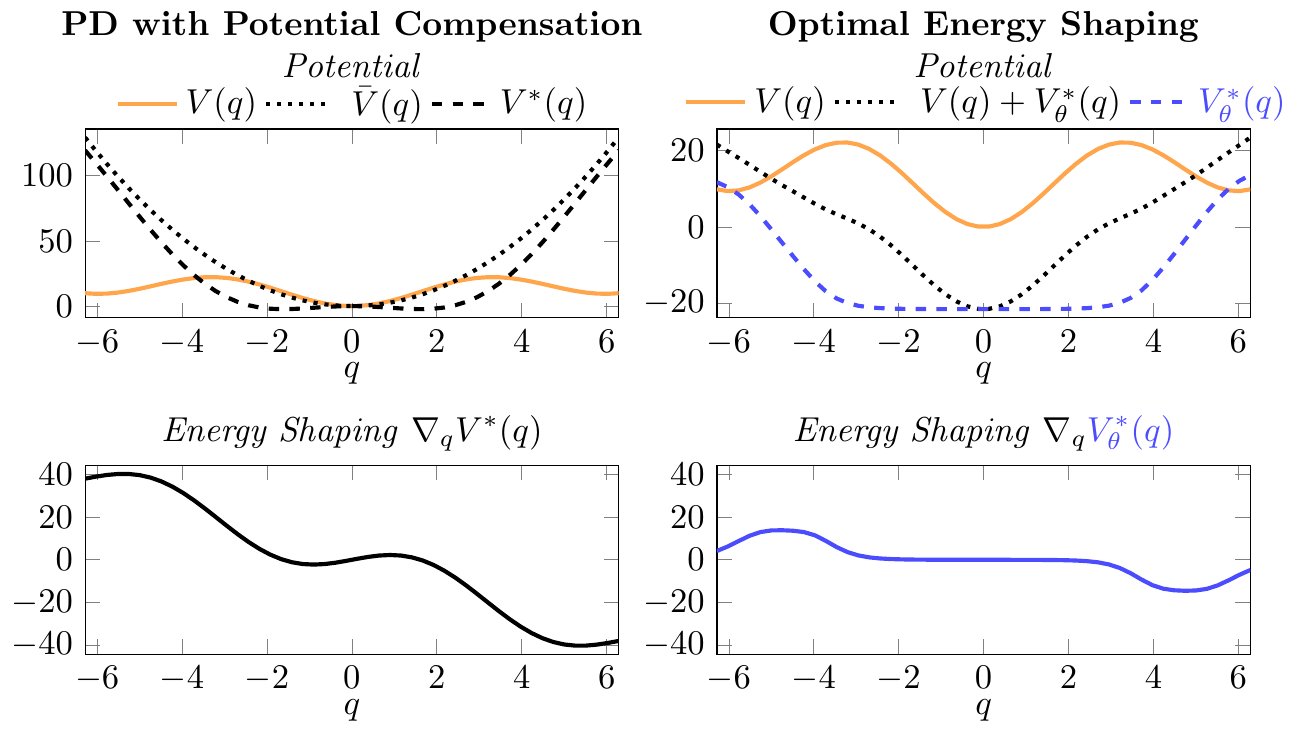}
    \caption{Energy shaping: A comparison between the PD with potential compensation and the OES approaches, both optimised for the same cost. It is clearly seen that the closed--loop potential $\Bar{V}$ in the PD case is by construction a parabola, while in the OES the major expressivity allowed by the neural network breaks this limitation and produces a nontrivial closed--loop potential $V+V_{\theta}^*$ which yields a better performance, as evident from the resulting control efforts.}
    \label{fig:exp1}
\end{figure*}
Let us first introduce the PH model of the inverted pendulum with elastic joint. The configuration manifold of the system is the one--sphere $\mathbb{S}^1$ and let $q\in\mathbb{S}^1$ be its position coordinate. The generalized momentum $p$ is then defined as $p=J\dot q$ where $J$ is the inertia of the pendulum. The total energy (Hamiltonian) of the system is 
$$
    \Ha(q,p) :=\frac{1}{2}J^{-1}p^2 + \underbrace{mgr(1-\cos q) + \frac{1}{2}kq^2}_{V(q)}.
$$
where $m,~r$ are respectively the mass and location of the center of mass of the pendulum. Moreover, $k$ is the stiffness of the torsional spring placed in the joint. By considering viscous friction with damping coefficient $\beta$ acting on the joint, the port-Hamiltonian realization of the model is 
$$
    \begin{aligned}
    \begin{bmatrix}
        \dot q\\
        \dot p
    \end{bmatrix}&=
    \begin{bmatrix}
    0 & 1\\
    -1& -\beta
    \end{bmatrix}\begin{bmatrix}
        \nabla_q \Ha \\
        \nabla_p \Ha
    \end{bmatrix} + \begin{bmatrix}
        0\\1
    \end{bmatrix}u\\
    y &= \dot q
    \end{aligned}
$$
In all experiments the model parameters have been fixed to $m=1~[Kg]$, $k=0.5~[N/\rad]$, $r=1~[m]$, $\beta=0.01~[Ns/\rad]$, $g=9.81~[m/s^2]$. 
The simulations have been carried out on a machine equipped with an \textsc{AMD Ryzen Threadripper 3960X} CPU and two \textsc{NVIDIA RTX 3090} graphic cards. The software has been implemented in {\tt Python} using {\tt torchdyn}~\cite{poli2020torchdyn} and {\tt torchdiffeq}~\cite{chen2018neural} libraries. In all the experiments, \textsc{Adam} optimizer \cite{kingma2014adam} has been used to perform the gradient descent iterations and, unless differently specified, its learning rate has been set to $10^{-3}$. After fixing a probability density function $\psi_0(x)$ of initial conditions, at each step of gradient descent, a batch of 2048 initial conditions $\{x_0^i\}_i=\{(q_0^i, p_0^i)\}_i$ were sampled from $\psi_0$ and integrated with a Dormand--Prince ({\tt dopri5}) ODE solver~\cite{dormand1980family} in a time horizon $t\in[0, T]$. Then, the cost function was computed and back--propagated via the adjoint method to obtain the cost gradient $\dd \ell/\dd \theta$. Finally the gradient descent step was carried out to update the parameters. All the models were trained with an integral cost chosen to be the control effort in the time horizon:
$$\ell(x_0) = L_\theta(x_0) + \gamma\int_0^T |u_\theta(t)|\dd t,\quad \gamma>0.$$
\paragraph*{\fcircle[fill=deblue]{3pt} Comparing OES to potential compensation}
First, we compared the optimal energy shaping to the classic \textit{PD with potential compensation controller} (PD+). In order to represent the regulation task in probabilistic form, we defined a target Normal probability density function $\psi^*(x)$ centered in the set point $(q^*,0)$ and variance $\sigma^2\mathbb{I}$. We then trained the model to maximize the likelihood of $\psi^*(x)$ at the terminal time $t=T$. This has been obtained by minimizing the approximated negative log--likelihood of the terminal state, i.e.
\begin{equation}\label{eq:exp_cost}
    \begin{aligned}
        \mathbb{E}_{x_0}[\ell(x_0)] \approx &-\frac{1}{N}\sum_{i=1}^N\log\psi^*\left(\Phi(x_0^i, u_\theta)\right)\\
        &+\frac{\gamma}{N}\sum_{i=1}^N\int_0^T \left|u_\theta\left(\Phi_t(x_0^i, u_\theta\right)\right|\dd t
    \end{aligned}
\end{equation}
\begin{figure*}
    \centering
    \includegraphics[scale=.95]{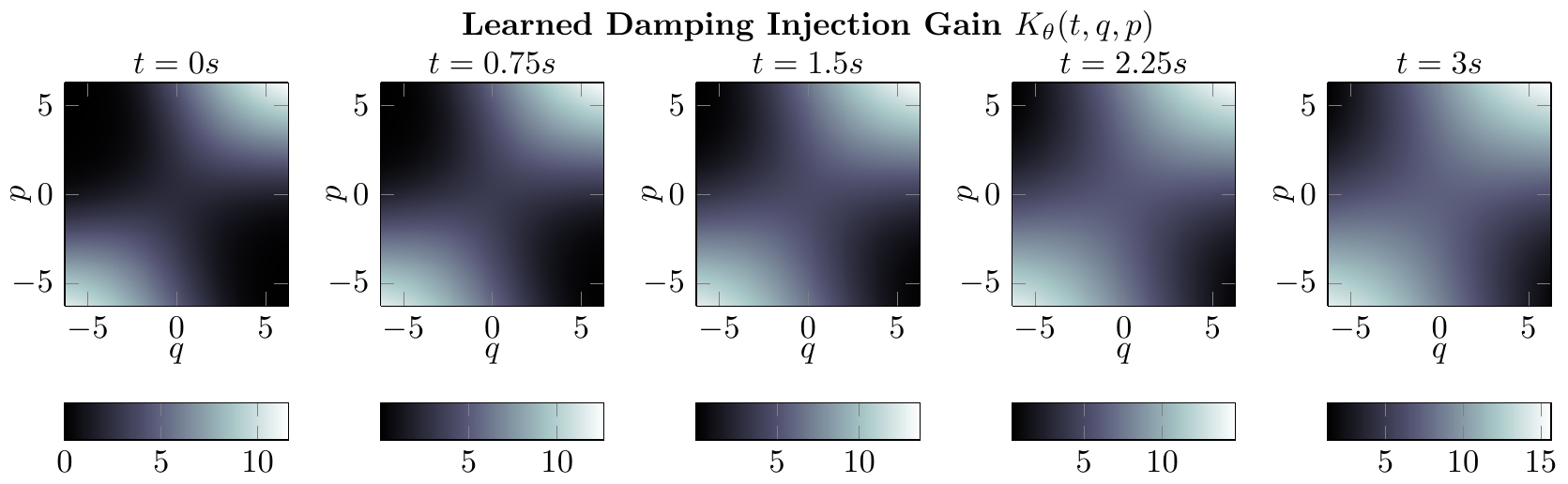}
    \caption{Time evolution of the state--dependent learned damping injection gain $K_\theta(t, q, p)$. Notice that at each plot exhibits a nontrivial distribution of the damping injection gains in the state space, greatly generalising the constant derivative action characteristic of the PD with potential compensation implementation. As further generalisation we implemented a smooth dependence on time directly, showing that a feedforward component of the control action would increase the performance along the task in nominal conditions.}
    \label{fig:damping_injection_gain}
\end{figure*}
In particular, we chose $q^*=0$, $\sigma^2=10^{-3}$, $T=3$. In order to obtain a fair comparison between the proposed optimal energy shaping and the PD with potential compensation controllers, we decided to optimize as well the hyper--parameters of the latter. Given the PD with potential compensation law
$$
    u_{\theta}(q, p) = \overbrace{-mgr\sin q - kq}^{\text{Potential Compensation}} + \underbrace{{\color{blue!70!white}k_q}(q - {q^*})}_{\nabla_q\Bar{V}(q)} - {\color{blue!70!white}k_d}J^{-1}p 
$$
we optimized the parameters $\theta=(k_p, k_d)$ with the cost function \eqref{eq:exp_cost}. The OES controller has been instead designed following \eqref{eq:NPunc_robot2}:
\begin{equation}
    u_\theta(q, p) =-\nabla_{q}{\color{blue!70!white}V^*_\theta(q)}- {\color{blue!70!white} K^*_\theta(t, q, p)}J^{-1}p.
\end{equation}
$V^*_\theta:\mathbb{S}^1\rightarrow\R$ was realized by a multi--layer perceptron comprising two hidden layers of $64$ neurons each with ${\tt SoftPlus}:x\mapsto \log(1+e^x)$ and $\tanh$ activations on the two hidden layer, respectively. $K^*_\theta(t, q, p)$ was build with a single hidden layer of 64 neurons with {$\tt SoftPlus$} activation on the hidden layer as well as appended to the output of the Neural Network.
\begin{figure}[b]
    \centering
    \includegraphics[scale=.95]{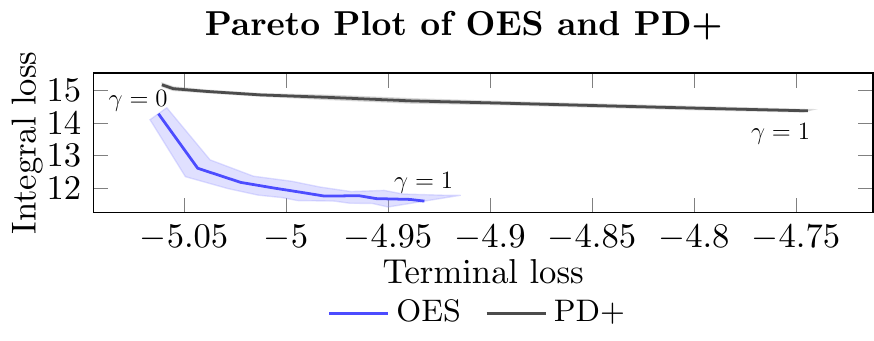}
    \caption{Pareto plot of the terminal and integral cost functions for different values of the weight $\gamma$, ranging from zero to one. The resulting pareto front of the proposed OES approach always lies closer to $[-\infty, 0]$ than the one of the PD+, confirming OES's higher degree of optimality in the considered task.}
    \label{fig:loss_pareto}
\end{figure}
We trained both the OES and the PD controllers on the cost \eqref{eq:exp_cost} with $\gamma=0.01$ and learning rates $\eta_{\tt OES}=10^{-3}$ and $\eta_{\tt PD} = 1$, respectively. All the parameters $\theta$ of the neural networks $V_\theta^*$, $K_\theta^*$ have been initialized according the Xavier scheme \cite{glorot2010understanding}. In line with the some recent heuristics regarding the initialization Neural ODEs parameters \cite{kidger2020neural}, the parameters of the last layers of both $V^*_\theta$ and $K^*_\theta$ have been then set to zero, i.e., at first step of optimization process, the controller will be always zero. Similarly in the PD+ case, $k_p$ and $k_d$ have been, initialized to zero.
 
 The probability density function $\psi_0(x)$ from which samples of initial conditions were drown was chosen as an uniform distribution in $[-2\pi, 2\pi]\times[-2\pi, 2\pi]$. Conversely, the variance of the target distribution $\psi^*$ was set as $\sigma^2 = 10^{-3}$. The optimized parameters of the PD controller converged to $k_p=6.55,~ k_d = 4.89 $. The values of the cost functions at convergence for both the OES and the PD with compensation controller are reported in Table \ref{tab:loss}. The resulting closed--loop phase--space trajectories of both methods are displayed in Figure~\ref{fig:traj1}.
\begin{table}[b]
    \centering
    \begin{tabular}{r|c|c}\toprule
               Cost Function       & OES & PD+  \\\midrule
     Terminal~~~$-\frac{1}{N}\sum_{i=1}^N\log\psi^*\left(\Phi(x_0^i, u_\theta)\right)$  & \bf{-5.00} & -4.89\\
         Integral $\frac{1}{N}\sum_{i=1}^N\int_0^T \left|u_\theta\left(\Phi_t(x_0^i, u_\theta\right)\right|\dd t$ & \bf{13.28} & 15.73\\\bottomrule
    \end{tabular}
    \vspace{3mm}
    \caption{Terminal and Integral Costs at convergence computed on samples $\{x_0^i\}_i\sim\psi_0$ with $\gamma=0.01$.}
    \label{tab:loss}
\end{table}

Figure \ref{fig:exp1} shows that the energy shaping control input learned by the system takes advantage of gravity to perform the task with a considerably lower control effort than the PD case. In particular, around the set point, the learned potential is such that the control action, and thus its effort, basically vanishes, letting the gravity and the damping injection do the job. Figure \ref{fig:damping_injection_gain} shows that the learned damping injection coefficient, as function of the system state, interestingly assumes values which are not symmetric with respect to the $p=0$ axis in the phase space. Contrarily to the constant damping characteristic of the PD implementation, the system damps an amount of energy which depends both on the sign of the velocity (which determines e.g. whether gravity is computing positive or negative work on the system) and the position, which determines which potential functions are dominant in that state. This represents a major generalisation of energy shaping and damping injection with respect to its implementation in the PD with potential compensation form, since the control action is free to range on a much bigger decision space, while remaining passive and thus conserving the advantages of the base methodology. As further generalisation we let the damping injection gains directly depend on time as well, leading to a feed--forward component of the control action which could increase performance in known environmental conditions. 
Due to the multi--objective nature of the optimization problem, we further evaluated the effect of the cost weighting coefficient $\gamma$ in an ablation study, varying its value from zero to one in ten equally--spaced steps. This has been performed through Monte Carlo simulation of 25 runs for each value of $\gamma$ with different initialization of the parameters of the neural networks $V_\theta^*$ and $K_\theta^*$ as well as the coefficients of the PD controller. The obtained pareto front is displayed in Figure~\ref{fig:loss_pareto}. The overall results clearly show that the OES methodology leads to a nontrivial, passive, nonlinear control action which yields a better performance with respect to the PD with potential compensation. This simple example is insightful because the results can be interpreted easily. 
\paragraph*{\fcircle[fill=wine]{3pt} Generalizing to arbitrary set points}
We further evaluate the OES approach by feeding a desired position set point to both the learned potential $V^*_\theta:(q,q^*)\mapsto V_\theta^*(q, q^*)$ and damping injection gain $K^*_\theta:(q,p, q^*)\mapsto K^*_\theta(q, p,q^*)$. In particular, we aimed at constructing a single controller able to generalize across different $q^*$. In such case we consider a distribution over target positions and momenta $x^*=(q^*,p^*)$ and subsequently constructed the terminal cost function as
$$
    L(x(T), x^*) := \mathbb{E}_{x^*}\left[(x(T)-x^*)^\top Q(x(T)-x^*)\right]
$$
with $Q = Q^\top>0$ resulting in the overall cost
$$
    \begin{aligned}
        \mathbb{E}_{x_0}[\ell(x_0)] = \mathbb{E}_{x_0}&\bigg[\mathbb{E}_{x^*}\left[e^\top_\theta(x_0, x^*) Qe_\theta(x_0, x^*)\right]\\
        &+\frac{1}{2}\int_0^1\left[u_\theta\left(\Phi_t(x_0, u_\theta\right)\right]^2\dd t\bigg]
    \end{aligned}
$$
with $e_\theta(x_0, x^*):=\Phi_{T}(x_0, u_\theta)-x^*$.
In practice, at each gradient descent iteration, we also sampled target positions and momenta $q^*, p^*$ uniformly in $[-2\pi,2\pi]\times[-10^{-4},10^{-4}]$ for each initial condition $x_0$. We thus obtained the approximate expectation of the terminal cost by 
$$
    \begin{aligned}
        \mathbb{E}_{x_0}[L] \approx \frac{1}{NM}\sum_{i=1}^N \sum_{j=1}^M e^\top_\theta(x_0, x^*)  Q e_\theta(x_0, x^*) 
    \end{aligned}
$$
where $Q$ was picked as $\diag([10,1])$.
Here, the Neural Networks parametrizing $V^*_\theta$ and $K^*_\theta$ have been chosen similar to the previous ones: an additional ${\tt SoftPlus}$--activated hidden layer has been added to both networks while the number of units in the hidden layers has been increased from 64 to 128. The controller was trained on a time horizon $[0, 1]$.
\begin{figure}[t]
    \centering
    \includegraphics[width=\linewidth]{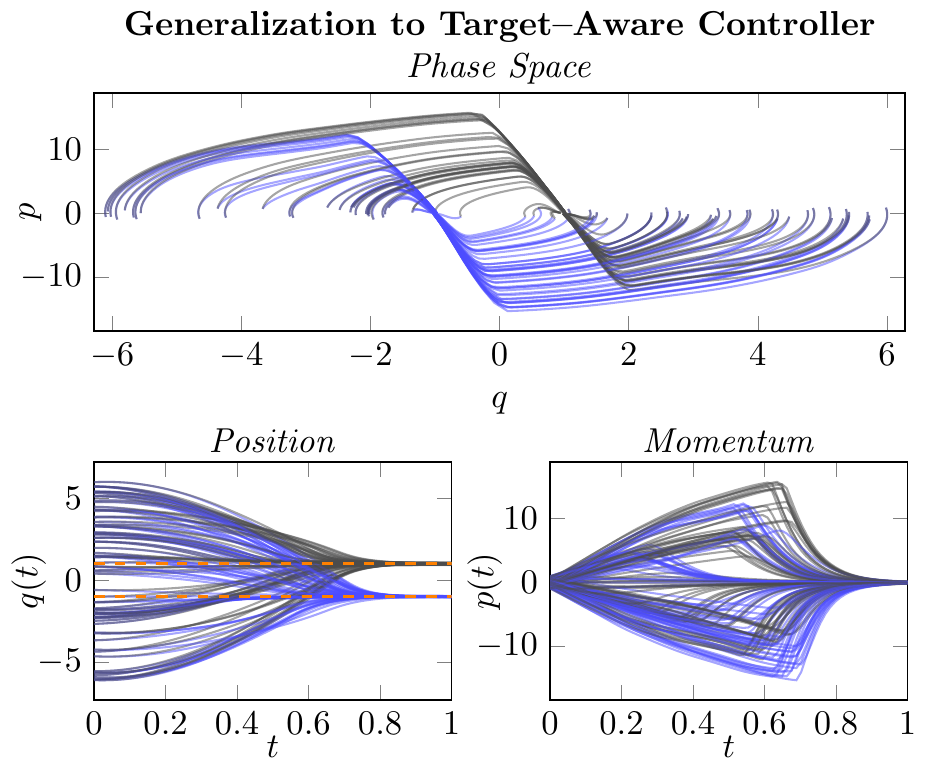}
    \vspace{-5mm}
    \caption{Phase--space trajectories of the optimal energy shaping controlled system trained on arbitrary set points $q^*$. Trajectories starting from the same (random) initial conditions asymptotically converge to different set points ({\color{blue!70!white} $q^*=-1$}, {\color{black!70!white} $q^*=1$}).}
    \label{fig:exp2}
\end{figure}
\begin{figure*}[t]
    \centering
    \includegraphics[scale=.95]{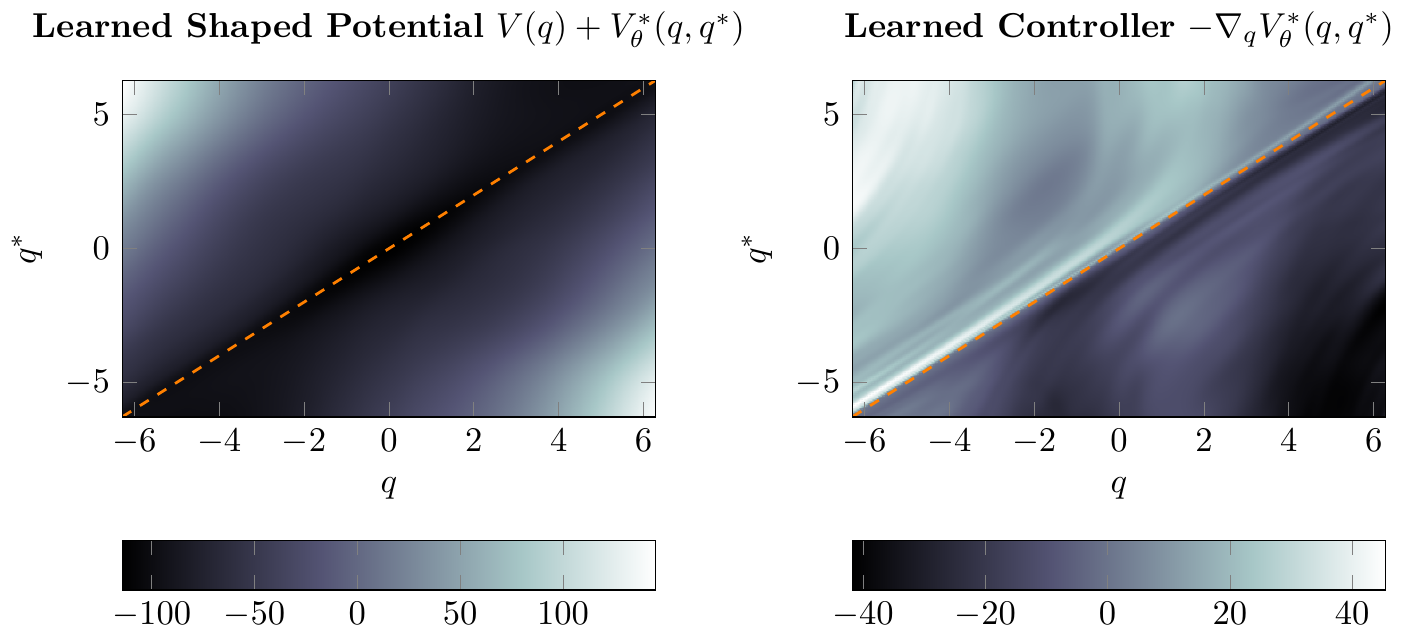}
    \caption{Learned optimal potential trained on arbitrary set points $q^*$ and corresponding energy shaping control action. The learned potential $V(q) + V_\theta^*(q, q^*)$ presents its minima on the line $q=q^*$, guaranteeing asymptotic stability of the desired configurations. On the other hand, unlike the classic quadratic potential, $V(q)+V^*(q, q^*)$ shows highly nonlinear behaviors around the minima which translate in a nonlinear control action $-\nabla_{q}V^*_\theta(q)$.}
    \label{fig:exp2ES}
\end{figure*}
\begin{figure*}[t]
    \centering
    \includegraphics[scale=.95]{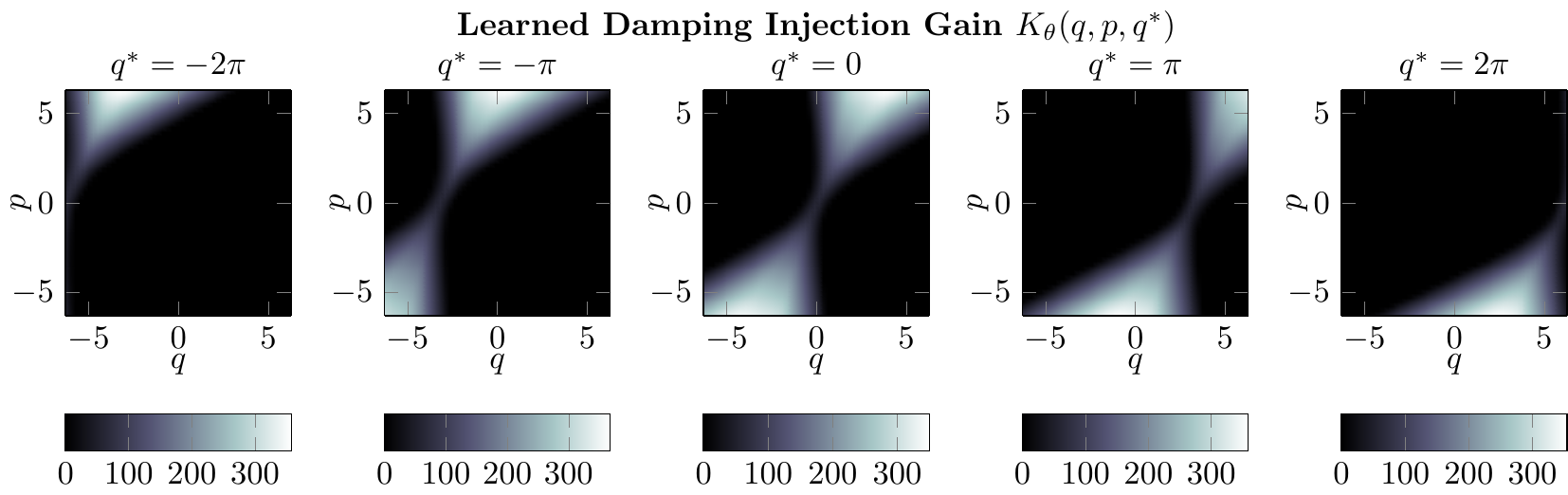}
    \caption{Learned state--dependent damping injection gains for different values of $q^*$. The learned gain $K_\theta(q, p, q^*)$ drastically change for the different values of $q^*$.}
    \label{fig:exp2DI}
\end{figure*}
{Figure \ref{fig:exp2} shows how the learned controller makes the system properly converge to different set points at the same rate. The graphs depicting the closed--loop potential and the energy shaping control action on an extended domain in which the set point $q^*$ spans an axis is shown in Figure \ref{fig:exp2ES}. Notice the minima on the line $q=q^*$, guaranteeing stability for any fixed $q^*$, and the nontrivial emerging control action, smoothly texturing the landscape visible by means of the used color code. In Figure \ref{fig:exp2DI} one can appreciate the further dependence of the damping injection gains on different slices of constant $q^*$, yielding an analogous interpretation to the one described in the previous experiment.}
%

% Related Work, Discussion and Conclusions
\section{Related Work}
Other relevant works attempting to shed light on the link between port-Hamiltonian modeling, energy based methods and optimal/learning control can be found in the excellent survey paper \cite{nageshrao2015port} (and references therein), where adaptive and iterative learning approach are discussed. Moreover, \cite{zhong2020benchmarking} provides a comprehensive comparison of different energy--based machine learning approaches. In \cite{sprangers2014reinforcement} \textit{reinforcement learning} (RL) techniques are applied to port-Hamiltonian systems. Alternatives to traditional PBC can be found in \cite{Klsch2020OptimalCO}, where time--continuous feedback controllers for port--Hamiltonian systems are designed using adaptive dynamic programming principles. An additional connection between the Port--Hamiltonian framework and machine learning consists in exploiting the structure of dissipative PH systems to design continuous--time optimizers for neural networks \cite{massaroli2019port}. 

Recent work on port--Hamiltonian systems in machine learning \cite{zhong2019symplectic, zhong2020dissipative} explored a direct to learn mechanical port--Hamiltonian dynamics from data by learning all the components of the model (e.g. Hamiltonian Function, Inertia Matrix, Dissipation, etc.). The latter and recent results on unsupervised learning of Lagrangian systems \cite{zhong2020unsupervised}, pave the way to the extension of the proposed Optimal Energy Shaping for end--to--end learning of physically--consistent control policies directly from sensor data.

The methodology presented in this work greatly differs from data--driven approaches to decision making such as deep RL \cite{arulkumaran2017deep}. Here, we focus on providing a systematic blending of deep learning function approximators and the PBC paradigm. We discuss how this preserves auspicable theoretical properties of the control framework whilst improving its task--based performance. We highlight similarities with the introduction of expressive neural network function approximators in classical RL \cite{bertsekas1995dynamic,sutton2018reinforcement}, which allowed its principles to be applied in settings previously though intractable due to the curse of dimensionality. The goal of this work is to showcase a similar phenomenon for the PBC framework.  
\section{Conclusions and Future Work}
In this work we introduced the control paradigm of optimal energy balancing for robotic systems. At its core, this approach blends passivity based control, optimal control and neural--networks. Unlike traditional passivity based control, tasked with assigning quadratic energy functions and fictitious linear damping terms to stabilize some set point, \textit{optimal energy shaping} casts control synthesis in an optimal control setting, solved with fine machine learning techniques while implicitly meeting stability and passivity requirements.

Our approach paves the way to an holistic approach to robot control, focused on \textit{shaping} the robot's behavior in assigned tasks rather than to merely achieve stabilization or trajectory tracking. In this perspective, future work will include the extension to geometric control in the robot's work--space, task design (with e.g. \textit{reinforcement learning methods}) and improved optimization techniques. 

\bibliographystyle{IEEEtran}
\bibliography{main}

% Appendix
\appendix
\section{Proofs}
\subsection{Proof of Proposition \ref{prop:PDE_general}}
%\PotShap
\begin{proof}
    Recalling the matching PDE (\ref{eq:matching_cond}), we have:
    \[
        \begin{aligned}
            &\begin{bmatrix}
                g^\perp(x) F^\top(x)\\
                g^\top(x)
            \end{bmatrix}
            \left(
                \nabla\Ha^*(x) - \nabla\Ha(x) 
            \right)
            = \mathbb{0}\\
            \Leftrightarrow & \nabla\Ha^*(x) - \nabla\Ha(x) \in \ker\left(
                \begin{bmatrix}
                    g^\perp(x) F^\top(x)\\
                    g^\top(x)
                \end{bmatrix}
            \right)
        \end{aligned}
    \]
    Thus, there exists a matrix--valued map $\Lambda:\cX\rightarrow\R^{n_a \times n_a}$ such that
    \begin{equation*}
        \ker\left(
            \begin{bmatrix}
                g^\perp(x) F^\top(x)\\
                g^\top(x)
            \end{bmatrix}\right)
        =
        \Span\left(\begin{bmatrix}
                \Lambda(x)\\
                \mathbb{O}
            \end{bmatrix}\right)
    \end{equation*}
    and, consequently, it holds,
    \begin{equation*}
         \nabla\Ha^*(x) - \nabla\Ha(x) 
        =
        \begin{bmatrix}
            \Lambda(x)\\
            \mathbb{O}
        \end{bmatrix}\alpha
    \end{equation*}
    for some $\alpha\in\R^{n_a}$. We can write 
    \begin{equation*}
        \begin{bmatrix}
             \nabla_{x_a}\Ha^*(x) - \nabla_{x_a}\Ha(x) \\
             \nabla_{x_b}\Ha^*(x) - \nabla_{x_b}\Ha(x)
        \end{bmatrix}
        =
        \begin{bmatrix}
            \Lambda(x)\\
            \mathbb{O}
        \end{bmatrix}\alpha,\quad x:=(x_a,x_b)
    \end{equation*}
    Moreover, if $\Lambda(x)$ is full--rank ($\forall x\in\cX~\rank(\Lambda)=n_a$), for any $\Ha^*(x)$ there exists an $\alpha$ such that $\nabla_{x_a}\Ha^*(x) - \nabla_{x_a}\Ha(x) = \Lambda(x)\alpha$. Therefore, the matching PDE reduces to
    \begin{equation*}
        \begin{aligned}
            &\nabla_{x_b}\Ha^*(x) - \nabla_{x_b}\Ha(x) = \mathbb{0}
        \end{aligned}
    \end{equation*}
    which is solved by any $\Ha^*(x)$ such that
    \[
        \Ha^*(x) = \Ha(x) + \phi^*(x_a) + c \quad (c\in\R)
    \]
    with $\phi^*:\R^\nA\rightarrow\R$, proving the result. 
\end{proof}
\subsection{Proof of Corollary \ref{cor:potential_shaping}}
%
%\PotShapCor
\begin{proof}
    We have 
    \[
        g = 
        \begin{bmatrix}
        \mathbb{O}\\
        B
        \end{bmatrix}
        \Rightarrow \forall \tilde{B}\in\R^{n_q\times n_q},~g^\perp = \begin{bmatrix}
        \tilde{B} & \mathbb{O}
        \end{bmatrix}
    \]
    Thus, the matching condition becomes 
    \[
        \begin{aligned}
            &\begin{bmatrix}
                \begin{bmatrix}
                    \tilde{B} & \mathbb{O}
                \end{bmatrix}
                \begin{bmatrix}
                             O & -\mathbb{I}\\
                    \mathbb{I} & D
                \end{bmatrix}\\
                \begin{bmatrix}
                    \mathbb{O} & B 
                \end{bmatrix}
            \end{bmatrix}\left(\nabla\Ha^*(q, p) - \nabla\Ha(q,p)\right) =\mathbb{0}\\
            \Leftrightarrow&\begin{bmatrix}
                \mathbb{O} & \tilde{B}\\ 
                \mathbb{O} & B 
            \end{bmatrix}\left(\nabla\Ha^*(q, p) - \nabla\Ha(q,p)\right)=\mathbb{0}
        \end{aligned}
    \]
    Indeed, there exists a full--rank $n_q$--by--$n_q$ matrix $\Lambda$ such that
    \[
        \ker\left(\begin{bmatrix}
                \mathbb{O} & \tilde{B}\\ 
                \mathbb{O} & B 
            \end{bmatrix}\right) = \Span\left(\begin{bmatrix}
                \Lambda\\ 
                \mathbb{O} 
            \end{bmatrix}\right)
    \]
    Thus, for Proposition \ref{prop:PDE_general} the PDE becomes
    \[
        \nabla_p\Ha^*(q, p) - \nabla_p\Ha(q, p) = \mathbb{0}
    \]
    and solved by any desired energy whose additive terms solely depend on $q$, $\Ha^*(q,p) := \Ha(q,p) + V^*(q)$, $V^*\in\cC^\infty(\cQ\rightarrow\R)$.
\end{proof}
\end{document}